\documentclass[review=false,authordraft=false]{jfp-epi}

\usepackage{colortbl}
\usepackage{array}
\usepackage{booktabs}
\usepackage{mathpartir}
\usepackage{listings}
\usepackage{needspace}

\definecolor{kwColor}{RGB}{0,114,178}
\definecolor{strColor}{RGB}{150,30,30}
\definecolor{cmtColor}{RGB}{105,105,105}
\lstdefinelanguage{Rust}{
  morekeywords={fn,pub,let,mut,while,use,impl,struct,enum,match,if,else,
    return,move,for,in,loop,ref,as,const,static,trait,type,where,unsafe,
    self,Self,true,false,mod,crate,super,dyn,break,continue,proof,spec,
    exec,requires,ensures,invariant,decreases,forall,exists,implies,by,
    verus},
  sensitive=true,
  morecomment=[l]{//},
  morestring=[b]",
  moredelim=[s][\color{sensColor}]{\#[spec(}{)]},
  moredelim=[s][\color{sensColor}]{\#[refined\_by(}{)]},
  moredelim=[l][\color{cmtColor}]{\#[trusted]},
  moredelim=[l][\color{cmtColor}]{\#[opaque]},
}
\definecolor{typeColor}{RGB}{0,114,178}
\definecolor{sensColor}{RGB}{0,149,3}
\definecolor{privColor}{RGB}{199,0,124}
\newcommand{\sens}[1]{{\color{sensColor}#1}}
\newcommand{\priv}[1]{{\color{privColor}#1}}
\everymath{\color{typeColor}}
\everydisplay{\color{typeColor}}
\newcolumntype{L}{>{\normalcolor}l}
\newcolumntype{C}{>{\normalcolor}c}
\newcolumntype{R}{>{\normalcolor}r}
\newcolumntype{P}[1]{>{\normalcolor}p{#1}}
\newcolumntype{N}{>{\color{typeColor}}r}
\newcolumntype{M}{>{\color{typeColor}}c}
\arrayrulecolor{black}

\newcommand{\sys}[1]{\textsc{#1}}
\newcommand{\forte}{\sys{Forte}}
\newcommand{\kwus}{\textunderscore\penalty500\relax}
\DeclareRobustCommand{\kw}[1]{\textup{\texttt{\let\_\kwus#1}}}
\newcommand{\cellstack}[1]{\begin{tabular}[t]{@{}L@{}}#1\end{tabular}}
\newcommand{\rulename}[1]{\textsc{#1}}

\makeatletter
\AtBeginDocument{%
  \fancypagestyle{firstpagestyle}{%
    \fancyhf{}%
    \renewcommand{\headrulewidth}{\z@}%
    \renewcommand{\footrulewidth}{\z@}%
    \fancyhead[LO]{\ACM@linecountL}}}
\makeatother

\begin{document}

\title{Forte: A sensitivity type system for imperative Rust}

\author{Chik\'e Abuah}

\begin{abstract}
We introduce \forte{}, a sensitivity type system for Rust whose soundness
rests on ownership. The graded sensitivity type systems, from Fuzz's
linear grading to Solo's environment indices, are pure calculi: a claim about a value holds for the value's whole lifetime
because nothing can mutate it. The imperative sensitivity analyses admit
assignment to first-order variables and no references, so no question of
aliasing arises in them. The programs that compute differentially private
statistics in deployment are Rust, and they mutate through borrows.
\forte{} closes this gap. Its central rules strongly update a sensitivity
environment through an exclusive borrow, at a primitive call and across a
checked function boundary; its soundness theorem is metric preservation
over an operational semantics with a store, in which the exclusivity of
\kw{\&mut} alone licenses framing across a mutating call, and two aliased
borrows suffice to refute the theorem without it.
Verus mechanizes the theorem, the function rule, and the refutation.
Flux checks \forte{} as an ordinary library, with no fork of the compiler;
a machine-checked theorem backs every deterministic primitive signature,
and a correspondence theorem transports metric preservation to the
programs the checker accepts. We evaluate \forte{} on
mechanism kernels from OpenDP with genuine in-place mutation, matching the
library's trusted stability maps with checked constants, covering the
constructors that have no proof document, rejecting off-by-one diameters,
tightened bounds, miscalibrated releases, and overspent budgets, and
deriving one trusted constant as an inferred loop invariant.
\end{abstract}


\maketitle

\section{Introduction}
\label{sec:intro}

Differential privacy~\cite{dmns06,dworkroth} is the gold standard for
privacy-preserving data analysis, and every differentially private
mechanism calibrates its noise to a \emph{sensitivity}: a bound on how far
a query's output can move when one individual's data changes. The privacy
guarantee touches the program at the sensitivity. A noise scale calibrated to an underestimated sensitivity yields
a mechanism that claims more protection than it provides, and the error
produces no symptom: the mechanism runs and the output looks plausible
while the guarantee no longer holds.

The record of sensitivity claims in deployed libraries motivates static
enforcement. Casacuberta et al.~\cite{csvw22} document widespread
sensitivity underestimation across differential-privacy implementations,
much of it traceable to the subtleties a type system should police: metric
conventions, neighboring definitions, and constants conflated between
them. The type-system literature proves its core calculi sound and
asserts its primitive libraries.

However, the type systems that could police these claims cannot host the
deployed code. Sensitivity type systems track sensitivity as a grade on
functional types, whether through graded linear types
(Fuzz~\cite{fuzz}, DFuzz~\cite{dfuzz}, Duet~\cite{duet}), contextual
linear types (Jazz~\cite{jazz}), or environment indices
(Solo~\cite{solo}), and all are pure calculi: values flow by substitution,
and a claim attached to a value never needs revisiting, because nothing
can mutate what it describes. The imperative sensitivity analyses,
Fuzzi~\cite{fuzzi}, the LightDP line~\cite{lightdp,shadowdp,checkdp}, and
the continuity analyses of Chaudhuri et al.~\cite{chaudhuri10,chaudhuri11},
admit assignment to first-order variables, which retypes the variable, and
no references, so two names never denote one location and no question of
aliasing arises. The code that computes differentially private statistics
in deployment is imperative Rust, and it mutates through references.
OpenDP~\cite{opendp}, the flagship open-source library, attaches to each
transformation a \emph{stability map}, a trusted runtime closure from
input distance to output distance. 109 hand-written proof documents
accompany the codebase; they reach 14 of the 68 trusted stability
closures in its transformation constructors, and every variant of the
bounded sum is among the closures without one (\S\ref{sec:casestudies}).
Mechanism kernels accumulate totals through mutable borrows, build
histograms by in-place increment, and grow tree layers in place. No
existing sensitivity type system can express these programs, and the gap
is structural: a borrow passed to a function is a second name for a
location the caller still owns, and a claim attached to the location must
survive the call.

We introduce \forte{}, a sensitivity type system for Rust whose soundness
rests on ownership. Our key insight is that the exclusivity of a Rust
mutable borrow alone licenses \emph{framing} across a mutating call, so
the type system can strongly update a sensitivity environment in
place where prior systems required purity or linearity. The insight has a
converse. Sensitivity environments make the update arithmetic itself
alias-tolerant, since index addition prices a double read of one
location; aliasing instead breaks the frame, the treatment of distinct
claims as claims about distinct locations, and two aliased exclusive
borrows suffice to derive a false claim (\S\ref{sec:metatheory}).
The soundness proof therefore consumes ownership at one point and asks
nothing substructural of data: weakening, contraction, and exchange are
all admissible in the sensitivity layer.

\kw{forte-rs} realizes \forte{} as an ordinary library, and
Flux~\cite{flux} checks it, with no compiler fork. Sensitivity
environments are refinement indices on opaque structs; strong updates are
Flux's existing \kw{\&mut} ensures clauses; Flux infers loop invariants. Flux's own
soundness theorem is unary, a statement about one run, while a
sensitivity claim relates two runs; metric preservation is the theorem
Flux cannot supply, and a correspondence theorem
(\S\ref{sec:implementation}) transports it to every program the checker
accepts in the fragment. Verus~\cite{verus} checks the trusted primitive
signatures: every deterministic signature obligation is a Verus theorem,
and Verus mechanizes the soundness core of the calculus, including the
function rule and the refutation of its aliased form. We evaluate \forte{} on verified re-implementations of OpenDP
mechanism kernels chosen for their in-place mutation, checking certified
constants against the library's own stability maps, and in one case
certifying a constant OpenDP asserts as trusted arithmetic.

\paragraph{Threat model.}
We assume an ``honest but fallible'' mechanism author: the programmer
intends to produce a differentially private program, but may
unintentionally introduce bugs, in particular wrong sensitivity constants
and wrong metric conventions. We do not defend against a malicious
author, and the probabilistic release step (noise addition, budget
accounting beyond affine token discipline) is outside the checked fragment
(\S\ref{sec:implementation}).

\paragraph{Arithmetic model.}
We idealize values: integers are unbounded and reals exact, so overflow
and floating-point rounding fall outside the guarantee. The
constants \forte{} certifies are the constants OpenDP's proof documents
argue about in the same idealized model; OpenDP's own treatment of
overflow (its checked and saturating sum variants) and of floating-point
error (additive relaxation terms in stability maps) composes with the
checked constant as a trusted additive layer, and the floating-point half
of the failures Casacuberta et al.\ report~\cite{csvw22} is out of scope
here.

\paragraph{Contributions.}
In summary, this paper makes the following contributions:
\begin{itemize}
\item $\lambda$-\forte{}, the first sensitivity type system for a
  language with mutation through exclusive references, with strong
  updates of sensitivity environments at primitive calls and across
  checked function boundaries (\S\ref{sec:calculus});
\item metric preservation restated and proved over a store-based
  operational semantics, identifying borrow exclusivity as the single
  ownership fact soundness consumes, with a counterexample showing its
  necessity; Verus mechanizes the theorem, the function rule, and the
  counterexample (\S\ref{sec:metatheory});
\item \kw{forte-rs}, a Flux-checked realization with a correspondence
  theorem relating accepted programs to $\lambda$-\forte{} derivations,
  and a trusted base that Verus itself checks: a development of 43
  verified theorems and lemmas covering every deterministic signature
  obligation, including the diameter and metric-conversion rules
  Casacuberta et al.\ show to be error-prone (\S\ref{sec:implementation});
\item a case study on OpenDP mechanism kernels with in-place mutation
  that turns the library's trusted stability constants into checked
  obligations, including the 54 constructors that have no proof document,
  rejects the errors Casacuberta et al.\ found in deployed libraries
  (off-by-one diameters, tightened bounds, and the Lipschitz reading of a
  clamp) together with under-calibrated releases and overspent budgets,
  and derives one trusted constant as an inferred loop invariant
  (\S\ref{sec:casestudies}).
\end{itemize}

\paragraph{Roadmap.}
The rest of the paper proceeds as follows. \S\ref{sec:overview}
introduces \forte{} by example, from OpenDP's trusted stability map to the
checked, mutating counterpart. \S\ref{sec:calculus} presents the
$\lambda$-\forte{} calculus and \S\ref{sec:metatheory} its metatheory.
\S\ref{sec:implementation} describes the implementation, the
correspondence theorem, and the trusted base. \S\ref{sec:casestudies}
reports the case studies, and \S\ref{sec:byexample} verifies OpenDP's
own primitive bodies and stability maps by example. Finally, we survey
related work (\S\ref{sec:related}) and conclude (\S\ref{sec:conclusion}).

\section{Overview of Forte}
\label{sec:overview}

Consider the most common transformation in deployed differential privacy:
a program clamps a dataset of integers to a public interval $[lo, hi]$,
sums it, and releases the sum with noise calibrated to the sum's
sensitivity. In
OpenDP, the sensitivity claim for this pipeline is code. The constructor
\kw{make\_sized\_bounded\_int\_checked\_sum} attaches to the transformation
a \emph{stability map}, a closure from input distance to output distance,
reproduced verbatim from the library:\footnote{OpenDP commit
\kw{4198f76}, September 2026, throughout.}

\begin{lstlisting}
StabilityMap::new_fallible(
    // If d_in is odd, we still only consider databases with
    // (d_in - 1) / 2 substitutions, so floor division is acceptable
    move |d_in: &IntDistance|
        T::inf_cast(d_in / 2).and_then(|d_in| d_in.inf_mul(&range)),
)
\end{lstlisting}

\noindent
OpenDP trusts the closure, and its constant is correct. The repository
holds 109 hand-written proof documents for such claims, but no document
exists for this constructor: its justification is the comment above and a
citation, and the documents overall reach 14 of the 68 trusted stability
closures in the transformation constructors (\S\ref{sec:casestudies}). The
correctness of a trusted constant is an empirical fact about a review
process, and the record of such constants across the ecosystem is poor:
Casacuberta et al.\ document widespread sensitivity underestimation across
deployed libraries~\cite{csvw22}.

In \forte{}, the same pipeline is a checked composition. The dataset is
\emph{sized}, meaning its length is public and fixed, so a neighboring
dataset substitutes rows without adding or removing any. The dataset
has a \emph{sensitivity environment} in its type, one coefficient per
declared source; clamping establishes a bounds invariant; and the sum rule
for bounded, sized data multiplies the environment by the interval's
\emph{diameter}:

\begin{lstlisting}
#[spec(fn(data: &SVec[@s], lo: i64, hi: i64{lo <= hi})
       -> SInt[(hi-lo)*s.sa, (hi-lo)*s.sb, (hi-lo)*s.sc])]
pub fn sized_bounded_sum(data: &SVec, lo: i64, hi: i64) -> SInt {
    let clamped = SVec::clamp(data, lo, hi);
    SVecBnd::sum(&clamped)
}
\end{lstlisting}

\noindent
An SMT solver discharges the constant $\mathit{hi} - \mathit{lo}$ as a
subtyping obligation at every use, and a machine-checked theorem backs
the leaf fact behind the sum rule (\S\ref{sec:implementation}). The
two conventions agree: OpenDP's $d_{\mathit{in}}$ counts symmetric
distance, which is 2 per substitution on sized data, so its
$\lfloor d_{\mathit{in}}/2 \rfloor \cdot (U - L)$ is the
per-substituted-row $\mathit{hi} - \mathit{lo}$ above. The checker rejects a claim of
$1$-sensitivity for this pipeline, the Lipschitz reading of a clamp that
Casacuberta et al.\ identify as a recurring confusion. It rejects the sum
without the clamp as well: the unbounded vector type has no sum rule at
all, which makes the bug class of~\cite{csvw22} unrepresentable, beyond
detecting it.

\paragraph{Sensitivity under mutation.}
Any of the pure sensitivity calculi could have expressed the pipeline
above. None could have hosted it in \emph{deployment}, because of the
surrounding program. Mechanism code in Rust accumulates: it adds partial
results into a running total through a mutable borrow, builds histograms
by in-place increment, pushes tree layers onto a growing vector, and
writes the accumulation once, as a function that it calls from many
places. Consider a helper that clamps and sums one dataset and adds the
result into whichever accumulator its caller lends it:

\begin{lstlisting}
#[spec(fn(acc: &mut SInt[@av], data: &SVec[@s],
          lo: i64, hi: i64{lo <= hi})
       ensures acc: SInt[av.a + (hi-lo)*s.sa,
                         av.b + (hi-lo)*s.sb,
                         av.c + (hi-lo)*s.sc])]
pub fn add_clamped_sum(acc: &mut SInt, data: &SVec, lo: i64, hi: i64) {
    let s = sized_bounded_sum(data, lo, hi);
    SInt::add_assign(acc, s);
}
\end{lstlisting}

\noindent
The signature has an \kw{ensures} clause: the borrowed accumulator leaves
the call with an environment computed from the environment it arrived
with. The checker checks the body once against this signature, with
\kw{acc} standing for whichever local a caller borrows; inside it, the trusted
\kw{add\_assign} performs the same kind of update at a primitive. A caller
accumulates three clamped sums, one per source, into one total:

\begin{lstlisting}
#[spec(fn(x: &SVec[1,0,0], y: &SVec[0,1,0], z: &SVec[0,0,1],
          lo: i64, hi: i64{lo <= hi})
       -> SInt[hi-lo, hi-lo, hi-lo])]
pub fn sum_three_datasets_inplace(x: &SVec, y: &SVec, z: &SVec,
                                  lo: i64, hi: i64) -> SInt {
    let mut acc = SInt::constant(0);
    add_clamped_sum(&mut acc, x, lo, hi);
    add_clamped_sum(&mut acc, y, lo, hi);
    add_clamped_sum(&mut acc, z, lo, hi);
    acc
}
\end{lstlisting}

\noindent
Each call performs a \emph{strong update}: the accumulator's sensitivity
environment after the call differs from its environment before, moving
from $\sens{(0,0,0)}$ through $\sens{(\mathit{hi}-\mathit{lo},0,0)}$ to
the final index.\footnote{Green marks sensitivity material and magenta privacy
material, following the 24-Color Palette from
\url{http://mkweb.bcgsc.ca/colorblind}; the two remain distinguishable
under deuteranopia.} In a pure calculus this pattern is inexpressible. In
an imperative calculus without references it is expressible only by
inlining the helper at every call, because a function that mutates an
argument needs a name for the caller's location. The soundness argument
for the call needs one fact about the program: while \kw{\&mut acc} is
live, no other view of \kw{acc} exists. Rust's borrow checker supplies
that fact.

\paragraph{The role of exclusivity.}
The update arithmetic is alias-tolerant: a call that reads one location
twice computes $x + x$, and the environment $2s$ the rules assign is a
correct bound, because environment-indexed sensitivity charges
contraction by addition. Aliasing instead breaks the \emph{frame}.
Consider a function taking two exclusive borrows whose signature declares
the second borrow unchanged. Applied to two borrows of one local, the type
system retains the second parameter's stale environment for a location
whose contents the first parameter updated, and a later read at the stale
environment is unsound. In \kw{forte-rs} the aliased call
\kw{double\_into(\&mut acc, \&mut acc)} is a borrow-check error before the
refinement checker runs. The exclusivity of \kw{\&mut} licenses treating
each claim as a claim about a distinct location, and
\S\ref{sec:metatheory} makes this premise the content of the soundness
theorem.

\paragraph{Loops and exact bounds.}
Accumulation in practice happens under a loop, and the bound a mechanism
author needs couples the environment to the trip count. The following
function accumulates one clamped daily release per iteration, through the
helper above; Flux checks it and infers its loop invariant:

\begin{lstlisting}
#[spec(fn(n: usize, lo: i64, hi: i64{lo <= hi && hi - lo <= 100})
       -> SInt{v: v.a <= 100 * n && v.b == 0 && v.c == 0})]
pub fn sum_n_daily_releases(n: usize, lo: i64, hi: i64) -> SInt {
    let mut acc = SInt::constant(0);
    let mut i = 0;
    while i < n {
        let day = SVec::from_source_a(load_day(i));
        add_clamped_sum(&mut acc, &day, lo, hi);
        i += 1;
    }
    acc
}
\end{lstlisting}

\noindent
Flux discovers the invariant coupling the environment to the counter.
Exact bounds of this shape are available because the environment here is
an integer-sorted index sharing a sort with \kw{n};
\S\ref{sec:implementation} describes this design constraint and the one
loop in our case studies whose invariant needs a hint.

\paragraph{Element writes at public indices.}
A borrow of one element is a strong update of one coordinate of a vector's
metric. A weekday report accumulates $n$ daily totals into 7 slots in
place:

\needspace{8\baselineskip}
\begin{lstlisting}
    while i < n {
        let daily = SInt::from_source_a(fetch_day_total(i));
        SVecL1::add_at(&mut report, i % 7, daily);
        i += 1;
    }
\end{lstlisting}

\noindent
The write index is public, so both runs of the program write the same
position, and the report's $L_1$ environment grows by the added value's
environment; Flux infers the invariant with no hints. A data-dependent
index would put the loop outside the checked fragment
(\S\ref{sec:casestudies}). The formal account of strong updates, function
boundaries, and loop invariants follows in \S\ref{sec:calculus}.

\section{Formal description of Forte}
\label{sec:calculus}

$\lambda$-Forte is a first-order, store-based core calculus for the checked
composition layer of \S\ref{sec:overview}: mutable locals, exclusive borrows
at call sites, functions whose parameters are exclusive borrows, loops over
public counters, and a trusted signature set $\Sigma$ standing for the
primitive library. We chose the fragment so that the ownership question is
live: the proof must establish a frame across a mutating call, and across
a checked function boundary, while remaining an induction on evaluation
(\S\ref{sec:metatheory}). The fragment excludes borrows that outlive a
call, references stored in data structures, closures over sensitive data,
and sensitive control flow; \S\ref{sec:metatheory} discusses what each
exclusion defers. Figure~\ref{fig:syntax} gives the syntax,
Figure~\ref{fig:typing} the typing rules, and Figure~\ref{fig:semantics}
the evaluation rules that write the store.

\begin{figure}
\small
\[
\begin{array}{lrcl}
\text{values} & v & ::= & r \mid () \mid \kw{true} \mid \kw{false} \mid
  \langle v, \ldots, v \rangle \\
\text{expressions} & e & ::= & v \mid x \mid e \oplus e \mid
  \kw{let}\ x = e\ \kw{in}\ e \mid x := e \mid e;\, e \\
& & \mid & \kw{if}\ e\ \kw{then}\ e\ \kw{else}\ e \mid
  \kw{while}\ e\ \kw{do}\ e \mid p(\bar{a}) \mid f(\bar{a}) \\
\text{arguments} & a & ::= & e \mid \&x \mid \kw{\&mut}\,x \\
\text{types} & \tau & ::= & \mathsf{pub}\,B \mid S[\sens{\vec{s}}] \mid
  \mathsf{Vec}_{L1}[\sens{\vec{s}}] \mid
  \mathsf{Vec}_{\mathsf{Sub}}[lo,hi,\sens{\vec{s}}] \mid
  \mathsf{Vec}_{\mathsf{Sub}}^{\circ}[\sens{\vec{s}}] \mid
  \mathsf{Vec}_{L\infty}[\sens{\vec{s}}] \\
\text{signatures} & \mathit{Sig} & ::= & \forall\vec{\alpha}.\
  (\kw{\&mut}\,\tau_1, \ldots, \kw{\&mut}\,\tau_n,
   \tau'_1, \ldots, \tau'_m) \to \tau\
  \mathsf{ensures}\ x_j : \tau^{\mathsf{out}}_j \\
\text{declarations} & D & ::= & \kw{fn}\ f(\kw{\&mut}\,x_1, \ldots,
  \kw{\&mut}\,x_n, y_1, \ldots, y_m) : \mathit{Sig}\ \{\, e \,\} \\
\text{contexts} & \Gamma & ::= & \cdot \mid \Gamma, x : \tau
\end{array}
\]
\caption{Syntax of $\lambda$-Forte. $p$ ranges over the trusted
primitives $\Sigma$ and $f$ over the declared functions $\Phi$; both have
signatures. In a signature the sensitive types draw their environments
from the variables $\vec{\alpha}$, and each output type
$\tau^{\mathsf{out}}_j$ is the $j$th borrowed parameter's type with its
environment replaced by an expression $g_j(\vec{\alpha})$ over those
variables.}
\Description{Grammar of values, expressions, arguments, types, signatures, declarations, and contexts.}
\label{fig:syntax}
\end{figure}

\paragraph{Sources and environments.}
A program declares a finite set of $k$ \emph{sources}, the units of
protection. Each source $i$ has a metric $M_i$ on its input space: the
absolute difference for scalar sources, and for dataset sources either the
elementwise $L_1$ distance or the substituted-row count on sized data. A
\emph{sensitivity environment} $\sens{\ensuremath{\vec{s}} \in
\mathbb{R}_{\ge 0}^{k}}$ assigns one coefficient per source; the
environment attached to a value bounds how far that value can move, in the
value's own metric, when the sources move: by at most
$\sens{\vec{s}}\cdot\vec{d}$ when source $i$ moves by $d_i$ in $M_i$.

\paragraph{Types.}
Types are public base types $\mathsf{pub}\,B$, sensitive scalars
$S[\sens{\vec{s}}]$, and sensitive vectors under three metrics: the
elementwise $L_1$ distance, the substituted-row count, and the $L_\infty$
distance. The $\mathsf{Sub}$ forms are for sized data; the bounded form
additionally records the domain invariant $[lo,hi]$ that licenses the
diameter sum rule (\S\ref{sec:overview}), and the $L_\infty$ form types the
count and score vectors that metric-converting primitives produce.
Subtyping raises environments pointwise: environments are upper bounds,
so weakening is sound. \rulename{T-Shared} types a shared borrow $\&x$ as
a read of $x$; a shared reference is immutable for its lifetime, so the
calculus treats it as a value.

\paragraph{Judgment.}
The typing judgment is flow-sensitive, with an output context recording the
types after evaluation:
\[
\Gamma \vdash e : \tau \dashv \Gamma'.
\]
Assignment retypes the assigned local, so ordinary mutation of locals is
already a strong update, as in the imperative sensitivity systems
(\S\ref{sec:related}). The rules that are new govern calls that mutate
through borrows, at a primitive and across a function boundary, and loops.

\begin{figure}
\small
\begin{mathpar}
\inferrule*[right=T-Var]
  {\Gamma(x) = \tau}
  {\Gamma \vdash x : \tau \dashv \Gamma}

\inferrule*[right=T-Shared]
  {\Gamma(x) = \tau}
  {\Gamma \vdash \&x : \tau \dashv \Gamma}

\inferrule*[right=T-Lit]
  {v \text{ a literal of base type } B}
  {\Gamma \vdash v : \mathsf{pub}\,B \dashv \Gamma}

\inferrule*[right=T-Arith]
  {\Gamma \vdash e_1 : \mathsf{pub}\,B \dashv \Gamma_1 \\
   \Gamma_1 \vdash e_2 : \mathsf{pub}\,B \dashv \Gamma_2}
  {\Gamma \vdash e_1 \oplus e_2 : \mathsf{pub}\,B \dashv \Gamma_2}

\inferrule*[right=T-Let]
  {\Gamma \vdash e_1 : \tau_1 \dashv \Gamma_1 \\
   \Gamma_1, x{:}\tau_1 \vdash e_2 : \tau \dashv \Gamma_2, x{:}\tau_1'}
  {\Gamma \vdash \kw{let}\ x = e_1\ \kw{in}\ e_2 : \tau \dashv \Gamma_2}

\inferrule*[right=T-Assign]
  {\Gamma \vdash e : \tau \dashv \Gamma'}
  {\Gamma \vdash x := e : \mathsf{unit} \dashv \Gamma'[x \mapsto \tau]}

\inferrule*[right=T-Seq]
  {\Gamma \vdash e_1 : \tau_1 \dashv \Gamma_1 \\
   \Gamma_1 \vdash e_2 : \tau \dashv \Gamma_2}
  {\Gamma \vdash e_1;\, e_2 : \tau \dashv \Gamma_2}

\inferrule*[right=T-Sub]
  {\Gamma \vdash e : \tau \dashv \Gamma' \\
   \tau <: \tau' \\ \Gamma' \sqsubseteq \Gamma''}
  {\Gamma \vdash e : \tau' \dashv \Gamma''}

\inferrule*[right=T-If]
  {\Gamma \vdash e : \mathsf{pub}\,\mathsf{bool} \dashv \Gamma_0 \\
   \Gamma_0 \vdash e_1 : \tau \dashv \Gamma_1 \\
   \Gamma_0 \vdash e_2 : \tau \dashv \Gamma_2 \\
   \Gamma_1 \sqsubseteq \Gamma' \\ \Gamma_2 \sqsubseteq \Gamma'}
  {\Gamma \vdash \kw{if}\ e\ \kw{then}\ e_1\ \kw{else}\ e_2 : \tau
   \dashv \Gamma'}

\inferrule*[right=T-While]
  {\Gamma \sqsubseteq \Gamma_{\mathit{inv}} \\
   \Gamma_{\mathit{inv}} \vdash e_c : \mathsf{pub}\ \mathsf{bool} \dashv
   \Gamma_{\mathit{inv}} \\
   \Gamma_{\mathit{inv}} \vdash e_b : \tau \dashv \Gamma_{\mathit{inv}}}
  {\Gamma \vdash \kw{while}\ e_c\ \kw{do}\ e_b : \mathsf{unit} \dashv
   \Gamma_{\mathit{inv}}}

\inferrule*[right=T-CallPure]
  {\Sigma(p) = \forall\vec{\alpha}.\ (\tau_1, \ldots, \tau_m) \to \tau \\
   \sigma \text{ instantiates } \vec{\alpha} \\
   \Gamma \vdash a_i : \sigma\tau_i \dashv \Gamma \quad (i \le m)}
  {\Gamma \vdash p(a_1, \ldots, a_m) : \sigma\tau \dashv \Gamma}

\inferrule*[right=T-CallMut]
  {\Sigma(p) = \forall\vec{\alpha}.\ (\kw{\&mut}\,\tau_1, \ldots,
     \kw{\&mut}\,\tau_n, \tau'_1, \ldots, \tau'_m) \to \tau\
     \mathsf{ensures}\ x_j : \tau^{\mathsf{out}}_j \\
   \sigma \text{ instantiates } \vec{\alpha} \\
   \Gamma(z_j) = \sigma\tau_j \quad (j \le n) \\
   z_1 \ldots z_n \ \text{pairwise distinct} \\
   \Gamma \vdash a_i : \sigma\tau'_i \dashv \Gamma \quad (i \le m)}
  {\Gamma \vdash p(\kw{\&mut}\,z_1, \ldots, \kw{\&mut}\,z_n,
     a_1, \ldots, a_m) : \sigma\tau
   \dashv \Gamma[z_j \mapsto \sigma\tau^{\mathsf{out}}_j]}

\inferrule*[right=T-CallFn]
  {\Phi(f) = \forall\vec{\alpha}.\ (\kw{\&mut}\,\tau_1, \ldots,
     \kw{\&mut}\,\tau_n, \tau'_1, \ldots, \tau'_m) \to \tau\
     \mathsf{ensures}\ x_j : \tau^{\mathsf{out}}_j \\
   \sigma \text{ instantiates } \vec{\alpha} \\
   \Gamma(z_j) = \sigma\tau_j \quad (j \le n) \\
   z_1 \ldots z_n \ \text{pairwise distinct} \\
   \Gamma \vdash a_i : \sigma\tau'_i \dashv \Gamma \quad (i \le m)}
  {\Gamma \vdash f(\kw{\&mut}\,z_1, \ldots, \kw{\&mut}\,z_n,
     a_1, \ldots, a_m) : \sigma\tau
   \dashv \Gamma[z_j \mapsto \sigma\tau^{\mathsf{out}}_j]}

\inferrule*[right=WF-Fn]
  {\Phi(f) = \forall\vec{\alpha}.\ (\kw{\&mut}\,\tau_1, \ldots,
     \kw{\&mut}\,\tau_n, \tau'_1, \ldots, \tau'_m) \to \tau\
     \mathsf{ensures}\ x_j : \tau^{\mathsf{out}}_j \\
   \text{for every ground instantiation } \sigma \text{ of } \vec{\alpha}: \\
   x_1{:}\sigma\tau_1, \ldots, x_n{:}\sigma\tau_n,
   y_1{:}\sigma\tau'_1, \ldots, y_m{:}\sigma\tau'_m
   \vdash e : \sigma\tau \dashv \Gamma' \\
   \Gamma'(x_j) <: \sigma\tau^{\mathsf{out}}_j \quad (j \le n)}
  {\vdash \kw{fn}\ f(\kw{\&mut}\,x_1, \ldots, \kw{\&mut}\,x_n,
     y_1, \ldots, y_m) : \Phi(f)\ \{\, e \,\}\ \mathsf{ok}}
\end{mathpar}
\caption{Typing rules of $\lambda$-Forte. Subtyping $\tau <: \tau'$ raises
environments pointwise and is the identity on public types;
$\Gamma \sqsubseteq \Gamma'$ is its pointwise extension to contexts. A
program is well-typed when every declaration is \textsf{ok} and its main
expression types in the empty context.}
\Description{The typing rules as inference rules, including the two call rules with a distinctness premise and the function well-formedness rule.}
\label{fig:typing}
\end{figure}

\begin{figure}
\small
\begin{mathpar}
\inferrule*[right=E-Assign]
  {\langle e, \sigma, \varsigma \rangle \Downarrow \langle v, \sigma' \rangle}
  {\langle x := e, \sigma, \varsigma \rangle \Downarrow
   \langle (), \sigma'[x \mapsto v] \rangle}

\inferrule*[right=E-WhileF]
  {\langle e_c, \sigma, \varsigma \rangle \Downarrow
   \langle \kw{false}, \sigma \rangle}
  {\langle \kw{while}\ e_c\ \kw{do}\ e_b, \sigma, \varsigma \rangle
   \Downarrow \langle (), \sigma \rangle}

\inferrule*[right=E-WhileT]
  {\langle e_c, \sigma, \varsigma \rangle \Downarrow
   \langle \kw{true}, \sigma \rangle \\
   \langle e_b, \sigma, \varsigma \rangle \Downarrow \langle v, \sigma_1 \rangle \\
   \langle \kw{while}\ e_c\ \kw{do}\ e_b, \sigma_1, \varsigma \rangle
   \Downarrow \langle (), \sigma_2 \rangle}
  {\langle \kw{while}\ e_c\ \kw{do}\ e_b, \sigma, \varsigma \rangle
   \Downarrow \langle (), \sigma_2 \rangle}

\inferrule*[right=E-CallMut]
  {\langle a_i, \sigma, \varsigma \rangle \Downarrow \langle v_i, \sigma \rangle
   \quad (i \le m) \\
   \llbracket p \rrbracket(\sigma(z_1), \ldots, \sigma(z_n), v_1, \ldots, v_m)
   = (w_1, \ldots, w_n, v)}
  {\langle p(\kw{\&mut}\,z_1, \ldots, \kw{\&mut}\,z_n, a_1, \ldots, a_m),
   \sigma, \varsigma \rangle \Downarrow
   \langle v, \sigma[z_1 \mapsto w_1, \ldots, z_n \mapsto w_n] \rangle}

\inferrule*[right=E-CallFn]
  {\kw{fn}\ f(\kw{\&mut}\,x_1, \ldots, \kw{\&mut}\,x_n, y_1, \ldots, y_m)
   \ \{\, e_f \,\} \in \Phi \\
   \langle a_i, \sigma, \varsigma \rangle \Downarrow \langle v_i, \sigma \rangle
   \quad (i \le m) \\
   u_1, \ldots, u_m \text{ fresh} \\
   \langle e_f[x_j := z_j][y_i := u_i],
     \sigma[u_i \mapsto v_i], \varsigma \rangle
   \Downarrow \langle v, \sigma' \rangle}
  {\langle f(\kw{\&mut}\,z_1, \ldots, \kw{\&mut}\,z_n, a_1, \ldots, a_m),
   \sigma, \varsigma \rangle \Downarrow
   \langle v, \sigma' \setminus \{u_1, \ldots, u_m\} \rangle}
\end{mathpar}
\caption{Evaluation rules of $\lambda$-Forte that write the store. Evaluation
is big-step and deterministic over a local store $\sigma$ and a source
store $\varsigma$; the omitted rules (literals, variables, arithmetic,
\kw{let}, sequencing, conditionals, pure calls) are standard and write
nothing. A primitive $p$ denotes a function $\llbracket p \rrbracket$ on
values that returns a result and one post-value per borrowed parameter; it
writes the borrowed locals and no other. A function call binds each
borrowed parameter to the caller's local itself, by substitution, and each
value parameter to a fresh local, and renames the callee's own
\kw{let}-bound locals apart from the caller's.}
\Description{Big-step evaluation rules for assignment, loops, primitive calls, and function calls.}
\label{fig:semantics}
\end{figure}

\paragraph{Strong updates through exclusive borrows.}
\rulename{T-CallMut} types a call to a primitive whose signature has an
\kw{ensures} clause: after the call, each borrowed local $z_j$ has the
$j$th output type of the signature, instantiated at the actuals'
environments. Pairwise distinctness of the borrowed locals is the
calculus-level shadow of Rust's exclusivity. The borrow checker discharges
it in the implementation; \S\ref{sec:metatheory} shows it is the premise
soundness needs. The by-value
fragment needs the premise only across the borrowed arguments themselves:
repeating a local as a \emph{value} argument, as in
$\kw{add\_assign}(\kw{\&mut}\,x, x)$, is sound, because the call reads the
value out before the write and the output environment
$\sens{\vec{s}+\vec{s}}$ correctly prices the double read. Index
arithmetic, in other words, charges contraction; only colliding write
capabilities need exclusivity.

\paragraph{Borrows across a function boundary.}
\rulename{WF-Fn} checks a declared function once, under its signature
scheme: it types the body with each borrowed parameter standing for a
distinct location at the environment the scheme names, and it requires
the output context to weaken into the \kw{ensures} types. A call,
\rulename{T-CallFn}, instantiates the scheme at the actuals' environments
and reuses the signature. The rule has the shape of \rulename{T-CallMut},
and its distinctness premise now does more than shadow the borrow
checker. Evaluation binds the callee's parameters to the caller's locals
by substitution (\rulename{E-CallFn}), and the callee's typing derivation
transports to the call site only when that substitution is injective on
the borrowed parameters, which is the distinctness of the actuals. With
two actuals equal, the callee's derivation describes two locations where
the call supplies one, and the \kw{ensures} entry for the second
parameter is a claim about a location the first parameter has updated;
\S\ref{sec:metatheory} exhibits the resulting false claim.
$\lambda$-Forte therefore places its demand on the host at the function
boundary: the checker may check the callee in isolation because every
call site must supply distinct locations, and Rust's borrow checker
discharges that obligation at every call.

\paragraph{Loops.}
A loop has an invariant context $\Gamma_{\mathit{inv}}$, which the
calculus quantifies existentially and the implementation infers
(\S\ref{sec:implementation}). Guards are public. Branching on sensitive
data would reintroduce the metric-blowup analysis of Fuzz-style case
elimination and is out of scope for the core; the case studies never
require it at the checked layer (\S\ref{sec:casestudies}).

\paragraph{The signature set.}
$\Sigma$ mirrors the trusted API of the implementation: constants and
source injections, \kw{add}, \kw{sub}, non-negative \kw{scale}, the
Lipschitz \kw{clamp}, the two vector sum rules ($L_1$-preserving on
$\mathsf{Vec}_{L1}$; diameter-scaling on the bounded $\mathsf{Sub}$ form,
with no rule at all on the unbounded form), the metric converters
(\kw{histogram}, \kw{prefix\_counts}, \kw{prefix\_sums}), and the
structural operations (\kw{chunk\_sums}, \kw{concat}, \kw{add\_at},
\kw{dup}). Each entry has a semantic validity obligation, which the leaf theorems
of \S\ref{sec:implementation} discharge once, in mechanized form;
\S\ref{sec:metatheory} consumes only their statements.

\section{Metatheory}
\label{sec:metatheory}

Soundness for $\lambda$-Forte is metric preservation, the property the
Fuzz--Solo lineage proves over substitution-based semantics, restated over
a store. Fix distances $\vec{d} \in \mathbb{R}_{\ge 0}^{k}$.

\begin{definition}[Relatedness]
\label{def:rel}
Source stores are related, $\varsigma_1 \sim_{\vec{d}} \varsigma_2$, when
source $i$'s inputs are within $d_i$ in $M_i$ for every $i$. Values are
related at a type as follows: at $\mathsf{pub}\,B$, equality; at
$S[\sens{\vec{s}}]$, $|v_1 - v_2| \le \sens{\vec{s}}\cdot\vec{d}$; at the
vector types, equal lengths and the same bound with the $L_1$ distance,
the substituted-row count, or the $L_\infty$ distance in place of
$|\cdot|$, with the bounded $\mathsf{Sub}$ form additionally requiring all
entries in $[lo,hi]$. Stores are related at a context,
$\sigma_1 \approx_{\Gamma} \sigma_2$, when every $x \in \mathsf{dom}(\Gamma)$
is related at $\Gamma(x)$.
\end{definition}

\begin{theorem}[Metric Preservation]
\label{thm:mp}
Let every declaration of the program be \textsf{ok}. If
$\Gamma \vdash e : \tau \dashv \Gamma'$,
$\varsigma_1 \sim_{\vec{d}} \varsigma_2$,
$\sigma_1 \approx_{\Gamma} \sigma_2$, and
$\langle e, \sigma_1, \varsigma_1 \rangle \Downarrow
\langle v_1, \sigma_1' \rangle$, then
$\langle e, \sigma_2, \varsigma_2 \rangle \Downarrow
\langle v_2, \sigma_2' \rangle$ for some $v_2$ and $\sigma_2'$ with
$v_1 \approx_{\tau} v_2$ and $\sigma_1' \approx_{\Gamma'} \sigma_2'$.
\end{theorem}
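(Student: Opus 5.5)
The proof is by induction on the derivation of $\langle e, \sigma_1, \varsigma_1 \rangle \Downarrow \langle v_1, \sigma_1' \rangle$, with an inner case analysis on the typing derivation. Before the induction we strengthen the statement in two ways, both needed for the function case.

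First, we quantify over all contexts $\Gamma$ and all closing substitutions of the callee's locals. Second, we add a \emph{frame} clause: every local outside $\mathsf{dom}(\Gamma)$ is left unchanged by evaluation. We also record a framing lemma for the relation $\approx_\Gamma$. If $\sigma_1 \approx_\Gamma \sigma_2$, then for any locations $z_1,\ldots,z_n$ that are pairwise distinct and any post-values related at the new types, $\sigma_1[\vec z \mapsto \vec w^1] \approx_{\Gamma[\vec z \mapsto \vec\tau]} \sigma_2[\vec z \mapsto \vec w^2]$. This holds because each $x \in \mathsf{dom}(\Gamma)$ is either some unique $z_j$ or untouched. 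Distinctness matters here: with $z_1 = z_2$ the second update overwrites the first, and the context entry would no longer describe the stored value. We also prove that subtyping and $\sqsubseteq$ are semantically sound, since raising $\vec s$ only enlarges $\vec s\cdot\vec d$, which lets the \rulename{T-Sub} case go through immediately.

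The structural cases are routine.

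\begin{itemize}
\item For \rulename{T-Var}, \rulename{T-Shared}, \rulename{T-Lit}, and \rulename{T-Arith}, we read off relatedness; public types force equal values.
\item For \rulename{T-Let} and \rulename{T-Seq}, we chain the induction hypotheses through the intermediate contexts. For \rulename{T-Let} we additionally drop $x$ using the frame clause.
\item For \rulename{T-Assign}, the induction hypothesis gives related values, and the framing lemma with $n=1$ applies.
\item For \rulename{T-If}, the guard is $\mathsf{pub}\,\mathsf{bool}$, so both runs take the same branch. This is what makes the second run exist and use the same rule. We then weaken into $\Gamma'$.
\item For \rulename{T-While}, we do an inner induction on the \rulename{E-WhileT} derivation. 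The invariant $\Gamma_{\mathit{inv}}$ is preserved by the body. The public guard evaluates equally in both runs, so run 2 unrolls exactly as many times as run 1. This also gives termination of run 2, which the theorem asserts.
\item For the primitive rules \rulename{T-CallPure} and \rulename{T-CallMut}, we invoke the semantic validity obligation of each $\Sigma$ entry, discharged by the leaf theorems. Its statement is: inputs related at the instantiated $\sigma\tau_j$ and $\sigma\tau'_i$ yield a result and post-values related at $\sigma\tau$ and $\sigma\tau^{\mathsf{out}}_j$. The framing lemma then applies, using the distinctness premise.
\end{itemize}

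The main obstacle is \rulename{T-CallFn}. Here the evaluation premise runs the substituted body $e_f[x_j := z_j][y_i := u_i]$, but \rulename{WF-Fn} only types the unsubstituted body in the parameter context $x_j{:}\sigma\tau_j,\, y_i{:}\sigma\tau'_i$.

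Our plan is to prove a \emph{renaming lemma}: typing derivations and evaluation both commute with an injective renaming of locals. Distinctness of the $z_j$, together with freshness of the $u_i$ and of the callee's renamed-apart let-locals, makes $\rho = [x_j \mapsto z_j, y_i \mapsto u_i]$ injective. From \rulename{WF-Fn} at the ground instantiation $\sigma$ chosen by \rulename{T-CallFn}, we obtain $\rho\Delta \vdash \rho e_f : \sigma\tau \dashv \rho\Gamma'$, where $\Delta$ is the parameter context.

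Next we build the starting relation $\sigma_1[u \mapsto v^1] \approx_{\rho\Delta} \sigma_2[u \mapsto v^2]$. Its pieces come from $\Gamma(z_j) = \sigma\tau_j$ and from the induction hypothesis on the by-value arguments. We then apply the induction hypothesis to the body's evaluation, which is a strict subderivation, so the induction is well-founded even for recursive functions.

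The hypothesis yields related results, related $z_j$ at $\Gamma'(x_j) <: \sigma\tau^{\mathsf{out}}_j$, and, by the frame clause, every other caller local unchanged. Deleting the $u_i$ and applying the framing lemma gives $\approx_{\Gamma[z_j \mapsto \sigma\tau^{\mathsf{out}}_j]}$.

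The frame clause is exactly where exclusivity is consumed. Without injectivity of $\rho$, the renamed context $\rho\Delta$ is not even a function, because it would assign two types to one $z$. The induction hypothesis would then no longer apply, and the stale \kw{ensures} entry cannot be justified.
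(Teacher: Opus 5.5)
Your proposal is correct and is essentially the paper's proof. The paper likewise inducts on the first run's evaluation with public guards keeping the two runs in lockstep, and uses leaf validity plus a distinctness-based frame for \rulename{T-CallMut}. For \rulename{T-CallFn} it also transports the body's typing by the renaming lemma under the injective $\rho$, then applies the induction hypothesis to the body's evaluation as a subderivation.

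The only difference is organizational: you fold the frame property into the induction hypothesis, whereas the paper proves its Frame lemma separately by induction on the body's evaluation. State your clause relative to $\mathsf{dom}(\Gamma)\cup\mathsf{dom}(\Gamma')$, since \rulename{T-Assign} does not require the assigned local to lie in $\Gamma$.
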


\begin{proof}
By induction on the first run's evaluation derivation, inverting the typing
derivation; public guards are equal across related runs, so the second run
follows the same rule skeleton and terminates with the first. The full case
analysis is in the appendix; the lemmas below cover the cases that are new
relative to the pure lineage.
\end{proof}

The theorem's conclusion includes the second run's termination. Guards are
public, so related stores drive both runs through equal guard values, and
one run diverges on a related input only when the other does. The
theorem therefore extracts no sensitivity claim from a program that fails
to terminate.

\begin{lemma}[Leaf validity]
\label{lem:leaf}
Every signature in $\Sigma$ is semantically valid: related arguments
produce related results and related post-values for borrowed parameters.
\end{lemma}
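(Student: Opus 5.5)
The proof goes entry by entry over $\Sigma$. For each signature $\forall\vec{\alpha}.\,(\kw{\&mut}\,\tau_1,\ldots,\tau'_m)\to\tau\ \mathsf{ensures}\ x_j:\tau^{\mathsf{out}}_j$ I first state semantic validity precisely: for every ground instantiation $\sigma$ and every $\vec{d}$, if the inputs satisfy $u_j \approx_{\sigma\tau_j} u_j'$ and $v_i \approx_{\sigma\tau'_i} v_i'$, then $\llbracket p\rrbracket(\vec{u},\vec{v}) = (\vec{w},v)$ and $\llbracket p\rrbracket(\vec{u}',\vec{v}') = (\vec{w}',v')$ satisfy $v \approx_{\sigma\tau} v'$ and $w_j \approx_{\sigma\tau^{\mathsf{out}}_j} w_j'$. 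The relations are those of Definition~\ref{def:rel}. Every bound in that definition has the form $\mathrm{dist}_M(\cdot,\cdot)\le\sens{\vec{s}}\cdot\vec{d}$, so each obligation reduces to a Lipschitz inequality in the value's own metric, with the environment dot product factored out. The source injections are the base case: the source store relation gives distance $\le d_i = \vec{e}_i\cdot\vec{d}$. Constants have distance $0$.

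I would group the entries by the single metric fact each one needs. \emph{Linear scalar ops} (\kw{add}, \kw{sub}, \kw{add\_assign}, nonnegative \kw{scale}, \kw{dup}) follow from the triangle inequality and homogeneity: $|(a+b)-(a'+b')|\le(\vec{s}+\vec{t})\cdot\vec{d}$. This covers the aliased value argument in $\kw{add\_assign}(\kw{\&mut}\,x,x)$, because the signature is instantiated with both environments equal to $\vec{s}$. \kw{clamp} is $1$-Lipschitz. \kw{add\_at} at a public index adds the scalar distance to the $L_1$ distance in one coordinate. \kw{concat} uses additivity of $L_1$ and of substitution counts over concatenation, and \kw{chunk\_sums} uses the triangle inequality per chunk with public chunk boundaries. \emph{Sum rules}: on $\mathsf{Vec}_{L1}$ the bound $|\sum v-\sum v'|\le\|v-v'\|_1$ is immediate. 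On bounded sized data, only the substituted rows differ, each by at most $hi-lo$ because entries lie in $[lo,hi]$, which gives $(hi-lo)\cdot\mathrm{subst}(v,v')$. \emph{Metric converters}: a substituted row changes \kw{histogram} by at most one in each of two bins, which yields the $L_\infty$ (and doubled $L_1$) constants. \kw{prefix\_counts} and \kw{prefix\_sums} change each prefix by at most the per-row bound, which yields $L_\infty$. The side conditions (equal lengths, entries within bounds) are preserved because lengths are public and clamp establishes the bounds invariant.

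The step I expect to be hardest is not any single inequality. It is making sure each signature's constant matches the metric convention Definition~\ref{def:rel} fixes. Examples are substituted rows versus symmetric distance, and factor $2$ versus $1$ for the histogram in $L_1$ versus $L_\infty$. A second concern is that the post-value relation for borrowed parameters is stated about the primitive's outputs alone, so that Theorem~\ref{thm:mp} can supply the frame separately. Casacuberta et al.\ show that exactly these constants are where errors arise. I would therefore discharge each obligation as a separate mechanized lemma over idealized integers and reals, stated directly against the relational definition, and cite the Verus development of \S\ref{sec:implementation} as the proof of record. The paper-level argument above then serves only as the per-entry justification.
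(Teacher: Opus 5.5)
Your proposal is correct and follows the paper's proof. Both go case by case over $\Sigma$: each entry's metric fact is composed by transitivity with the environment bound $\vec{s}\cdot\vec{d}$, where the fact is the triangle inequality for \kw{add}/\kw{add\_assign}, extraction of a nonnegative factor for \kw{scale}, 1-Lipschitzness of \kw{clamp}, the per-row diameter for the bounded sum, or source relatedness for injections, and the named Verus leaf theorems serve as the proof of record. One small imprecision: \kw{prefix\_sums} takes an $L_1$-typed input, and its fact is $L_\infty$ of the output bounded by the $L_1$ distance of that input (what \kw{prefix\_sums\_l1\_linf} states), not a per-row bound.
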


Lemma~\ref{lem:leaf} is the single seam between the type system and
primitive behavior, and a machine-checked theorem discharges every
non-definitional case (\S\ref{sec:implementation}). For example, the
diameter sum rule's case states that on same-length runs with entries in
$[lo,hi]$, the clamped sums differ by at most the substituted-row count
times $hi - lo$.

\begin{lemma}[Frame]
\label{lem:frame}
A primitive call writes the locations of its borrowed arguments and no
other. A function call writes the locations of its borrowed actuals and
the callee's fresh locals, which scope out, and no other. The call
preserves relatedness at every other variable of the context verbatim.
\end{lemma}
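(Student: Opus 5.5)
The Frame lemma has two halves. The first is a single-run fact about which store locations a call can change. The second is a relational corollary about the variables a call does not touch. I would prove the single-run half first and then read the relational half off it.

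For primitive calls the argument is immediate from \rulename{E-CallMut}. The value arguments $a_i$ evaluate as $\langle a_i,\sigma,\varsigma\rangle \Downarrow \langle v_i,\sigma\rangle$, which leaves the store unchanged; this holds by inspection of the omitted read-only rules together with \rulename{T-Shared}, which treats $\&x$ as a read. The final store is then $\sigma[z_1\mapsto w_1,\ldots,z_n\mapsto w_n]$. Hence $\sigma'(y)=\sigma(y)$ for every $y\notin\{z_1,\ldots,z_n\}$.

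For function calls I would prove a stronger auxiliary statement by induction on the evaluation derivation of an arbitrary expression $e$: if $\langle e,\sigma,\varsigma\rangle\Downarrow\langle v,\sigma'\rangle$, then $\sigma'$ agrees with $\sigma$ off $W(e)$. Here $W(e)$ is the syntactic write set: the targets of assignments, the borrowed actuals of calls, and the \kw{let}-bound locals of $e$, closed under the write sets of called function bodies after renaming. The \rulename{E-WhileT} case uses the induction hypothesis twice and transitivity of agreement. The \rulename{E-CallFn} case applies the hypothesis to the substituted body $e_f[x_j:=z_j][y_i:=u_i]$. Under this substitution the body's write set becomes a subset of $\{z_j\}\cup\{u_i\}\cup\{\text{callee let-locals}\}$, because the callee is closed over its parameters and its own locals. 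The evaluation rule renames the callee's let-locals apart, so they are fresh exactly as the $u_i$ are. The final $\setminus\{u_1,\ldots,u_m\}$ removes the value-parameter locals. The fresh let-locals either scope out through the \kw{let} rule or were never in $\mathsf{dom}(\sigma)$, so they cannot disturb any caller variable. This step is the main obstacle. It needs a precise invariant that callee bodies mention no caller locals except through their borrowed parameters, and that renaming keeps the fresh names disjoint from $\mathsf{dom}(\Gamma)$. Recursive calls mean the write set must be defined as a least fixed point, or the induction must run on derivations rather than on syntax. I would choose the derivation-based version and state the invariant as: every location written during the derivation lies in the borrowed actuals or in names absent from the initial $\mathsf{dom}(\sigma)$.

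The relational part then follows directly. Take related runs with $\sigma_1\approx_\Gamma\sigma_2$, and let $y\in\mathsf{dom}(\Gamma)$ be distinct from every $z_j$. Applying the single-run fact to each run separately gives $\sigma_1'(y)=\sigma_1(y)$ and $\sigma_2'(y)=\sigma_2(y)$. Relatedness at $\Gamma(y)$ therefore transfers verbatim. The output context of \rulename{T-CallMut} and \rulename{T-CallFn} is $\Gamma[z_j\mapsto\cdot]$, so it assigns $y$ the same type $\Gamma(y)$, and the relation is the one needed.

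Distinctness of the $z_j$ is not used here. Distinctness is instead what lets the updated $z_j$ be related at their individual output types, via Lemma~\ref{lem:leaf} or via \rulename{WF-Fn}, without one write overwriting a claim about another.
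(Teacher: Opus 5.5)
Your proposal is correct and follows the paper's proof. The primitive case is read off \rulename{E-CallMut}. The function case confines the substituted body's writes to the actuals $z_j$, the fresh $u_i$, and the renamed callee locals, by induction on the body's evaluation. Relatedness off $\{z_1,\ldots,z_n\}$ then transfers verbatim because both runs leave those entries unchanged. Your derivation-based write invariant, your handling of recursion, and your remark that distinctness plays no role in this lemma make explicit what the paper's proof leaves implicit.
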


\begin{lemma}[Strong update]
\label{lem:update}
Under \rulename{T-CallMut}'s premises, the updated stores are related at
the updated context: Lemma~\ref{lem:leaf} relates them at each borrowed
$z_j$, and Lemma~\ref{lem:frame} elsewhere. The proof consumes pairwise
distinctness here: it guarantees that each location receives one
post-value, so the pointwise check at $z_j$ observes the value
Lemma~\ref{lem:leaf} describes.
\end{lemma}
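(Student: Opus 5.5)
We prove the Strong update lemma by checking the pointwise definition of $\approx$ directly on the post-call stores. Fix the premises of \rulename{T-CallMut}: $\Sigma(p)$, the instantiation $\sigma$, $\Gamma(z_j) = \sigma\tau_j$, pairwise distinct $z_1,\ldots,z_n$, and $\Gamma \vdash a_i : \sigma\tau'_i \dashv \Gamma$. Also assume $\varsigma_1 \sim_{\vec d} \varsigma_2$, $\sigma_1 \approx_\Gamma \sigma_2$, and a first run via \rulename{E-CallMut}.

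\emph{Step 1: relate the arguments.} Each value argument $a_i$ is typed with output context $\Gamma$, so the induction hypothesis of Theorem~\ref{thm:mp} applies to it. The same holds directly for a shared borrow via \rulename{T-Shared}, and for \rulename{T-Var} by the hypothesis $\sigma_1\approx_\Gamma\sigma_2$. Either way the second run evaluates $a_i$ to some $v^2_i$ with $v^1_i \approx_{\sigma\tau'_i} v^2_i$, and both stores are unchanged. For the borrowed arguments, $\sigma_1(z_j) \approx_{\sigma\tau_j} \sigma_2(z_j)$ is read off $\sigma_1\approx_\Gamma\sigma_2$ at $z_j$. Repeated locals among the value arguments, or a value argument that is also some $z_j$, cause no trouble: each occurrence is read before the write and is related independently.

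\emph{Step 2: apply Lemma~\ref{lem:leaf}.} $\llbracket p\rrbracket$ is total on related inputs and deterministic, so the second run also fires \rulename{E-CallMut}. Validity of $\Sigma(p)$ at the instantiation $\sigma$ yields $v^1\approx_{\sigma\tau}v^2$ and $w^1_j \approx_{\sigma\tau^{\mathsf{out}}_j} w^2_j$ for every $j$.

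\emph{Step 3: check the updated context pointwise.} Let $\Gamma^\ast = \Gamma[z_j\mapsto\sigma\tau^{\mathsf{out}}_j]$ and $\sigma^\ast_k = \sigma_k[z_1\mapsto w^k_1,\ldots,z_n\mapsto w^k_n]$. For $x\in\mathsf{dom}(\Gamma^\ast)$ there are two cases.
\begin{itemize}
\item If $x\notin\{z_j\}$, then $\Gamma^\ast(x)=\Gamma(x)$, and $\sigma^\ast_k(x)=\sigma_k(x)$ by Lemma~\ref{lem:frame}, so relatedness carries over verbatim.
\item If $x = z_j$, we must show that $\sigma^\ast_k(z_j)=w^k_j$ and that $\Gamma^\ast(z_j)=\sigma\tau^{\mathsf{out}}_j$.
\end{itemize}

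This second case is where distinctness is consumed, and it is the only delicate step. The iterated updates $\sigma[z_1\mapsto w_1]\cdots[z_n\mapsto w_n]$ and $\Gamma[z_1\mapsto\cdot]\cdots$ agree with the "$j$th" reading only when no later index overwrites $z_j$. Pairwise distinctness says exactly that no such $j' \neq j$ with $z_{j'}=z_j$ exists, so the lookup at $z_j$ returns the $j$th post-value in both runs and the $j$th output type in the context. Lemma~\ref{lem:leaf} relates precisely that pair.

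I expect the main obstacle to be stating this lookup fact cleanly: the update is simultaneous, so I would prove a small lemma that simultaneous update at distinct keys is well-defined and order-independent. Without distinctness the lemma fails, since a later $w_{j'}$ would be checked against the type $\sigma\tau^{\mathsf{out}}_j$, which foreshadows the counterexample. The remaining bookkeeping, namely $\mathsf{dom}(\Gamma^\ast)=\mathsf{dom}(\Gamma)$ and the fact that the result relation is part of the conclusion, is routine.
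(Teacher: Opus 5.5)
Your proof is correct and is essentially the paper's own argument, which appears as the mutating-primitive-call case of the proof of Theorem~\ref{thm:mp}: the induction hypothesis relates the value arguments, store relatedness relates the reads at each $z_j$, Lemma~\ref{lem:leaf} relates the post-values, distinctness makes the lookup at $z_j$ return the $j$th post-value and the $j$th output type, and Lemma~\ref{lem:frame} handles every other variable. One caveat on your closing remark: the primitive's denotation is a function on values that returns separate post-values, so a collision $z_j = z_{j'}$ breaks the lemma only if the store update and the context update resolve it differently; under a consistent reading (say, last write wins in both), Lemma~\ref{lem:leaf} still relates the surviving pair, which is why the paper's genuine counterexample, \kw{double\_into}, sits at the function boundary rather than at a primitive.
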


\begin{lemma}[Renaming]
\label{lem:rename}
Typing and evaluation are invariant under an injective renaming of locals:
if $\rho$ is injective on the locals of $e$, then
$\Gamma \vdash e : \tau \dashv \Gamma'$ if and only if
$\rho\Gamma \vdash \rho e : \tau \dashv \rho\Gamma'$, and
$\langle e, \sigma, \varsigma \rangle \Downarrow \langle v, \sigma' \rangle$
if and only if
$\langle \rho e, \rho\sigma, \varsigma \rangle \Downarrow
\langle v, \rho\sigma' \rangle$.
\end{lemma}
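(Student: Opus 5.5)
Both halves of the lemma go by structural induction on the relevant derivation: the typing derivation for the first equivalence and the evaluation derivation for the second. In each case the forward direction is the real work, and the reverse direction is the forward direction applied to $\rho^{-1}$, which is well defined on the locals of $\rho e$ because $\rho$ is injective on the locals of $e$. Throughout we treat $\rho$ as acting on binders too, so $\rho(\kw{let}\ x = e_1\ \kw{in}\ e_2) = \kw{let}\ \rho x = \rho e_1\ \kw{in}\ \rho e_2$. We read the hypothesis as injectivity on every local that occurs free, bound, or as a borrow target, together with the locals in $\mathsf{dom}(\Gamma)$ and $\mathsf{dom}(\sigma)$.

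\textbf{Typing.} Every rule of Figure~\ref{fig:typing} mentions locals in only a few ways: through lookups $\Gamma(x)$, extensions $\Gamma, x{:}\tau$, updates $\Gamma[x \mapsto \tau]$, the pointwise order $\sqsubseteq$, and side conditions such as pairwise distinctness of $z_1, \ldots, z_n$. Each of these commutes with $\rho$:
\begin{itemize}
\item $(\rho\Gamma)(\rho x) = \Gamma(x)$;
\item $\rho(\Gamma[x \mapsto \tau]) = (\rho\Gamma)[\rho x \mapsto \tau]$;
\item $\Gamma_1 \sqsubseteq \Gamma_2$ iff $\rho\Gamma_1 \sqsubseteq \rho\Gamma_2$.
\end{itemize}
Distinctness of $z_1, \ldots, z_n$ is equivalent to distinctness of $\rho z_1, \ldots, \rho z_n$ exactly because $\rho$ is injective. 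This is the one place where injectivity is essential, mirroring its role in Lemma~\ref{lem:update}. Types and the signature sets $\Sigma, \Phi$ contain no locals, so the instantiation $\sigma$ in \rulename{T-CallMut} and \rulename{T-CallFn} is unchanged. The case analysis is then mechanical. \rulename{T-Let} needs the output context of the form $\Gamma_2, x{:}\tau_1'$ to be mapped to $\rho\Gamma_2, \rho x{:}\tau_1'$; this holds because $\rho x \notin \mathsf{dom}(\rho\Gamma_2)$, which again follows from injectivity. \rulename{T-While} and \rulename{T-If} take $\rho\Gamma_{\mathit{inv}}$ and $\rho\Gamma'$ as their witnesses.

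\textbf{Evaluation.} We induct on the big-step derivation, and the store operations commute with $\rho$ in the same way:
\begin{itemize}
\item $(\rho\sigma)(\rho x) = \sigma(x)$;
\item $\rho(\sigma[x \mapsto v]) = (\rho\sigma)[\rho x \mapsto v]$;
\item $\rho(\sigma \setminus U) = \rho\sigma \setminus \rho U$.
\end{itemize}
The source store $\varsigma$ and all values are untouched, which is why the statement keeps $v$ and $\varsigma$ fixed. \rulename{E-CallMut} goes through because $\llbracket p \rrbracket$ receives the same argument values, and the simultaneous update $\sigma[z_j \mapsto w_j]$ renames pointwise.

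\textbf{Main obstacle.} The hard case is \rulename{E-CallFn}. Its premise evaluates $e_f[x_j := z_j][y_i := u_i]$ under a choice of fresh $u_i$ and of names for the callee's \kw{let}-bound locals. We must show two things. First, $\rho$ applied to the instantiated body coincides with the body instantiated at $\rho z_j$ and at fresh names $u_i'$. Second, the evaluation is insensitive to which fresh names are chosen.

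For the first point, we pick $u_i' = \rho' u_i$, where $\rho'$ extends $\rho$ injectively so that it sends the fresh locals outside the range of $\rho$ on the caller's locals. This extension exists because there are infinitely many locals. With this choice, $\rho$ commutes with both substitutions: we have $\rho'(e_f[x_j := z_j][y_i := u_i]) = e_f[x_j := \rho z_j][y_i := \rho' u_i]$, because $e_f$'s own locals are renamed apart.

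For the second point, we prove an auxiliary fact: evaluation is equivariant under arbitrary injective renamings. The clean way to get this is to prove that statement first, by the same induction, with the \rulename{E-CallFn} case appealing to the induction hypothesis on the smaller body derivation. Once it holds, the choice of fresh names is immaterial. The result $\sigma'$ then renames to $\rho'\sigma'$, and removing the fresh locals gives $\rho\sigma' \setminus$ restricted as required, since $\rho'$ and $\rho$ agree off the fresh locals.

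We expect this freshness bookkeeping, rather than any sensitivity reasoning, to be the only delicate point.
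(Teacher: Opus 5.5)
Your proposal is correct and takes the paper's route: induction on the typing and evaluation derivations, using that an injective $\rho$ commutes with lookup, extension, and update of contexts and stores and preserves the distinctness premise, with \rulename{E-CallFn} handled by choosing its fresh locals fresh for $\rho\sigma$. Your extension $\rho'$ spells out what the paper compresses into one clause. One overstatement: distinctness is not the only place injectivity is needed, since the well-definedness of $\rho\Gamma$ and $\rho\sigma$, the commutation $\rho(\Gamma[x \mapsto \tau]) = (\rho\Gamma)[\rho x \mapsto \tau]$, and your own \rulename{T-Let} step all use it too.
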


\begin{lemma}[Function call]
\label{lem:call}
Under \rulename{T-CallFn}'s premises, with $f$ \textsf{ok}, the call's
output stores are related at the output context. Distinctness of the
actuals makes the substitution $[x_j := z_j]$ injective, so
Lemma~\ref{lem:rename} transports the callee's derivation at the
instantiation $\sigma$ to the call site; the induction hypothesis on the
body relates the callee's post-stores at its output context, which weakens
into the \kw{ensures} types; Lemma~\ref{lem:frame} covers every other
variable of the caller.
\end{lemma}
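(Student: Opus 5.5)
The plan is to prove the lemma as the \rulename{E-CallFn} case inside the induction of Theorem~\ref{thm:mp}. Here the induction hypothesis is available for the evaluation of the instantiated body $e_f[x_j := z_j][y_i := u_i]$, which is a strict subderivation of the call's derivation. Fix related source stores $\varsigma_1 \sim_{\vec{d}} \varsigma_2$ and caller stores $\sigma_1 \approx_\Gamma \sigma_2$. Suppose the first run evaluates the call. I would first handle the value arguments. Each $a_i$ is typed in $\Gamma$ with output context $\Gamma$, at $\sigma\tau'_i$, and evaluates without writing the store. The induction hypothesis on these subderivations gives second-run values $v^2_i$ with $v^1_i \approx_{\sigma\tau'_i} v^2_i$. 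I choose the same fresh names $u_i$ in both runs; freshness is only up to renaming, and Lemma~\ref{lem:rename} lets me align the choices. The extended stores $\sigma_r[u_i \mapsto v^r_i]$ are then related at the context $\Gamma, u_1{:}\sigma\tau'_1, \ldots, u_m{:}\sigma\tau'_m$.

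The central step is transporting the callee's typing. Since $f$ is \textsf{ok}, \rulename{WF-Fn} at the ground instantiation $\sigma$ from \rulename{T-CallFn} gives
\[
x_1{:}\sigma\tau_1, \ldots, x_n{:}\sigma\tau_n, y_1{:}\sigma\tau'_1, \ldots, y_m{:}\sigma\tau'_m \vdash e_f : \sigma\tau \dashv \Gamma_f
\]
with $\Gamma_f(x_j) <: \sigma\tau^{\mathsf{out}}_j$. Let $\rho = [x_j := z_j][y_i := u_i]$, extended to the callee's \kw{let}-bound locals by the renaming-apart that \rulename{E-CallFn} performs. The map $\rho$ is injective because of three facts: the $z_j$ are pairwise distinct by the premise of \rulename{T-CallFn}, the $u_i$ are fresh, and the let-bound locals are renamed apart. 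Lemma~\ref{lem:rename} then yields $\rho\Delta \vdash \rho e_f : \sigma\tau \dashv \rho\Gamma_f$, where $\rho\Delta = z_j{:}\sigma\tau_j, u_i{:}\sigma\tau'_i$. The premise $\Gamma(z_j) = \sigma\tau_j$ gives $\sigma_1 \approx_{\rho\Delta} \sigma_2$ on the extended stores. The induction hypothesis applied to the body's evaluation then gives a terminating second run $\langle v_2, \sigma_2'\rangle$ with $v_1 \approx_{\sigma\tau} v_2$ and $\sigma_1' \approx_{\rho\Gamma_f} \sigma_2'$. Assembling these gives the second run's \rulename{E-CallFn} derivation.

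It remains to relate the post-stores at $\Gamma[z_j \mapsto \sigma\tau^{\mathsf{out}}_j]$. At each $z_j$, the store relation at $\rho\Gamma_f(z_j) = \Gamma_f(x_j)$ weakens to $\sigma\tau^{\mathsf{out}}_j$, since relatedness is monotone under subtyping: environments raise pointwise and the bounds only loosen. At every other variable $w$ of $\Gamma$, Lemma~\ref{lem:frame} says the call writes only the $z_j$ and the fresh locals, and the latter are removed by $\setminus\{u_i\}$. So $\sigma_r'(w) = \sigma_r(w)$, and $w$ stays related at $\Gamma(w)$.

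I expect the main obstacle to be the frame step, and it rests on the renaming step. The body's derivation only describes variables in $\rho\Delta$. Showing that the body never touches caller locals outside $\{z_j\}$ needs two things: the callee's free locals are among its parameters and its own renamed-apart lets, and each let-bound local is popped by \rulename{T-Let} and has no residue in the store. I would establish this first, by a separate induction showing that evaluation writes only variables named in the expression. Distinctness is used exactly once, in the injectivity of $\rho$, and that is where the aliased counterexample fails.
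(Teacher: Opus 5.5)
Your proposal is correct and matches the paper's proof, which appears as the function-call case in the appendix proof of Theorem~\ref{thm:mp}, essentially step for step: the induction hypothesis on the value arguments, \rulename{WF-Fn} at the chosen instantiation, injectivity of $\rho$ from distinctness, freshness and renaming apart, Lemma~\ref{lem:rename}, the induction hypothesis on the body, weakening into the \kw{ensures} types, and Lemma~\ref{lem:frame} for the remaining caller variables. Two of your additions only make explicit what the paper leaves implicit: aligning the fresh $u_i$ across the two runs, and the write-set induction behind the frame step (which the paper's proof of Lemma~\ref{lem:frame} also performs).
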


\paragraph{Necessity of exclusivity.}
The theorem needs the distinctness premise as a hypothesis. Declare
\[
\begin{array}{l}
\kw{fn}\ \kw{double\_into}(p\colon \kw{\&mut}\,S[\sens{\alpha}],\;
q\colon \kw{\&mut}\,S[\sens{\beta}]) \\
\quad \mathsf{ensures}\ p\colon S[\sens{\alpha+\beta}],\;
q\colon S[\sens{\beta}] \ \{\, p := p + q \,\},
\end{array}
\]
which is \textsf{ok}: its body adds $q$ into $p$ and leaves $q$ unchanged.
Applied to two borrows of one local $x$ with $\Gamma(x) = S[\sens{1}]$,
the output context retains, through the $q$ entry, the claim
$x : S[\sens{1}]$, while the location now holds a value that two runs at
source distance 1 place at distance 2. A subsequent read of $x$ propagates
the false claim. The failure is a frame violation: the rule treated the
two \kw{ensures} entries as claims about distinct locations. The update
arithmetic itself tolerates aliasing, as noted in \S\ref{sec:calculus}, so
the proof consumes exclusivity where Lemmas~\ref{lem:update}
and~\ref{lem:call} used distinctness, and nowhere else.

\paragraph{No linearity.}
Weakening, contraction, and exchange are all admissible in the sensitivity
layer; environment addition prices duplication of claims at use sites. The budget layer alone is affine, and by library move semantics
in place of any rule of the calculus. This locates the demand
$\lambda$-Forte places on its host: exclusivity of write capabilities, and
nothing substructural about data.

\paragraph{Mechanization.}
Verus mechanizes the scalar fragment of the calculus: syntax, a
fuel-based interpreter, function-style flow-sensitive typing with annotated
join and invariant contexts, and Theorem~\ref{thm:mp} with the second
run's termination folded into its conclusion, in 46 verified items with no
assumptions beyond the solver. The fragment comprises mutable locals with
assignment as strong update, public conditionals and loops, source reads,
the scalar primitives, \kw{add\_assign} through one borrow with a by-value
argument, and declared procedures whose parameters are all exclusive
borrows. Evaluation goes through an \emph{access map} from variables to
store cells; a call binds the callee's parameters to the caller's cells,
so the call is by reference, and the theorem's hypotheses include the
injectivity of the access map, which the call case establishes from the
distinctness of the actuals; no other case uses it. The mechanized
procedure rule is \rulename{T-CallFn} for monomorphic signatures and
bodies that mention only their parameters; Lemma~\ref{lem:rename} handles
environment-polymorphic schemes, by-value parameters, and callee-fresh
locals on paper. The development mechanizes the refutation as well: the
theorem \kw{aliased\_call\_refutation} declares \kw{double\_into}, proves
its signature valid, and exhibits related input stores whose outputs under
the aliased call are unrelated at the premise-free output context, while
the typing function rejects the call. The library's development
mechanizes Lemma~\ref{lem:leaf}'s vector cases separately
(\S\ref{sec:implementation}).

\section{Implementation}
\label{sec:implementation}

\kw{forte-rs} realizes Forte as an ordinary Rust library, and
Flux~\cite{flux} checks it, with no fork of the compiler and no new
checker. A
sensitivity-bearing type is an \kw{\#[opaque]} struct whose refinement
index is the environment; the payload is invisible to the checker, which
is Solo's split between a runtime value and its ghost accounting. For a
fixed program the source set is finite, so an environment is a $k$-tuple
index, with no map needed, and all environment arithmetic in signatures is
plain SMT arithmetic. The width is fixed because a refinement index has
one sort and the tuple's arity is part of that sort; the library
instantiates it at three, which every kernel fits, and another width
means regenerating the types. A map-sorted index would lift the
restriction in principle, but the rules need addition, scaling, and
comparison of whole environments, which SMT arrays do not supply without
quantifiers, whereas the tuple keeps that arithmetic linear and decidable,
and that linearity lets Flux infer the loop invariants. Flux's generic
refinements~\cite{genericrefinements} abstract over a predicate on an
index of fixed sort, which suits polymorphism over metrics
(\S\ref{sec:byexample}); library-level source polymorphism turns on
whether they can abstract over the width of the environment itself. Strong updates come from Flux's existing treatment
of \kw{\&mut} with an \kw{ensures} clause, at trusted signatures and at
checked functions alike; \rulename{T-CallMut} and \rulename{T-CallFn}
required no extension of Flux. Table~\ref{tab:encoding} lists the
encoding.

\begin{table}
\centering
\small
\begin{tabular}{LL}
\toprule
$\lambda$-Forte & \kw{forte-rs} \\
\midrule
$S[\sens{\vec{s}}]$, $\mathsf{Vec}_{\cdot}[\sens{\vec{s}}]$ &
  \kw{\#[opaque]} structs refined by a 3-tuple of \kw{int} or \kw{real} sort \\
$\Sigma$, with Lemma~\ref{lem:leaf} obligations &
  \kw{\#[trusted]} signatures, each naming its Verus theorem \\
$\Phi$, under \rulename{WF-Fn} &
  checked functions with \kw{spec} signatures \\
$\forall\vec{\alpha}$ in signatures &
  refinement parameters (\kw{@s}) \\
\rulename{T-CallMut}, \rulename{T-CallFn} &
  \kw{\&mut} parameters with \kw{ensures} clauses \\
distinctness of borrowed actuals &
  the borrow checker of \kw{rustc} \\
\rulename{T-Sub} &
  implication between index refinements, which SMT discharges \\
\rulename{T-While} with $\Gamma_{\mathit{inv}}$ &
  inferred loop invariants \\
$\mathsf{pub}\,B$ &
  unrefined \kw{i64}, \kw{usize}, \kw{bool}, and refined public values \\
\bottomrule
\end{tabular}
\caption{The encoding of $\lambda$-Forte in \kw{forte-rs}.}
\label{tab:encoding}
\end{table}

\paragraph{Unary and relational.}
Flux's soundness theorem is unary: an accepted program satisfies its
refinements on every single run. A sensitivity claim relates two runs, and
no unary statement about an opaque index entails it. The host therefore
cannot supply metric preservation, and the encoding claims that the host's
unary discipline transports it. We make that claim a theorem. Fix the \emph{fragment} $F$ of \kw{forte-rs} programs: bodies in
the syntax of \S\ref{sec:calculus} whose sensitive-typed refinements are
conjunctions of per-coordinate atoms $\nu.i \le e$, $\nu.i \ge e$, or
$\nu.i = e$ with $e$ an expression over public values, at least one upper
bound per coordinate. Every kernel of \S\ref{sec:casestudies} is in $F$.
The \emph{elaboration} of such a refinement is the environment whose $i$th
coordinate is the least upper bound the atoms give $\nu.i$; the
elaboration of a checked signature is the scheme whose environment
variables are the signature's refinement parameters.

\begin{theorem}[Correspondence]
\label{thm:corr}
Let $P$ be a program in $F$ that the checker accepts, under the model of
the check that Appendix~\ref{app:corr} states. Then every function of $P$
is \textsf{ok} in $\lambda$-Forte at the elaboration of its signature,
and the elaboration of the type the checker assigns at any program point
is a $\lambda$-Forte type for it. Consequently Theorem~\ref{thm:mp}
applies to $P$.
\end{theorem}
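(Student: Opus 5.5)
The proof goes by simulation of Flux's check by $\lambda$-Forte derivations. It proceeds by induction on the checking of each function body, under the model of the check in Appendix~\ref{app:corr}. We treat that model as a flow-sensitive refinement judgment $\Delta \vdash_{\mathsf{F}} e : T \dashv \Delta'$, in which $\Delta$ maps locals to refined Rust types and each \kw{\&mut} call updates the callee's borrowed locals through its \kw{ensures} clause. Write $\mathcal{E}(-)$ for elaboration. It sends a refinement in $F$ to the environment of least upper bounds per coordinate, extends pointwise to contexts, and sends an unrefined or public-refined type to $\mathsf{pub}\,B$.

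The main claim, proved by induction on the Flux derivation, is that $\Delta \vdash_{\mathsf{F}} e : T \dashv \Delta'$ implies $\mathcal{E}(\Delta) \vdash e : \mathcal{E}(T) \dashv \mathcal{E}(\Delta')$. Most cases are direct. Variables, literals, public arithmetic, \kw{let}, sequencing and assignment map rule-for-rule onto \rulename{T-Var}, \rulename{T-Lit}, \rulename{T-Arith}, \rulename{T-Let}, \rulename{T-Seq} and \rulename{T-Assign}. A shared borrow maps to \rulename{T-Shared}. Flux's subtyping is SMT implication between refinements. We must show that an implication $\phi \Rightarrow \psi$ between refinements in $F$ yields $\mathcal{E}(\phi) \le \mathcal{E}(\psi)$ pointwise. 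This holds because the least upper bound on $\nu.i$ under $\phi$ is at most any upper bound that $\psi$ implies; the requirement of at least one upper bound per coordinate makes both sides finite. This lemma gives \rulename{T-Sub} and the $\sqsubseteq$ premises of \rulename{T-If}. Public branch guards map to \rulename{T-If}. The inferred loop invariant is a Flux context; its elaboration is the witness $\Gamma_{\mathit{inv}}$ for \rulename{T-While}, and the monotonicity lemma supplies $\Gamma \sqsubseteq \Gamma_{\mathit{inv}}$.

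Calls to trusted primitives elaborate to \rulename{T-CallPure} and \rulename{T-CallMut}. Each trusted signature's elaboration must be an entry of $\Sigma$, which we establish by inspection of the library. We take $\sigma$ to be the instantiation Flux chooses for the refinement parameters. Pairwise distinctness of the borrowed actuals comes from the fact that the program passed \kw{rustc}'s borrow checker, which the model assumes runs first. Two \kw{\&mut} borrows of one local cannot both be live at one call. Calls to checked functions are analogous and give \rulename{T-CallFn}. For \textsf{ok}-ness, Flux checks each body once with symbolic refinement parameters. Instantiating those parameters at any ground $\sigma$ is a substitution into a valid SMT derivation, so it remains valid. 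The main claim then types the body under $\sigma$, and the \kw{ensures} check elaborates, by the monotonicity lemma, to $\Gamma'(x_j) <: \sigma\tau^{\mathsf{out}}_j$. Hence \rulename{WF-Fn} holds. Once every function is \textsf{ok} and the main expression types, Theorem~\ref{thm:mp} applies.

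The main obstacle is the interaction of elaboration with path-sensitive refinements. Flux may use facts such as $\nu.i \le e$ where $e$ depends on public locals, or on guards known only inside a branch or loop. The least upper bound is then a symbolic expression over public values, not a constant. I would first handle this by allowing $\lambda$-Forte environments to be expressions over public locals. Since these locals are related by equality (Definition~\ref{def:rel}), they are equal in both runs, so relatedness is evaluated at a common value. I would also check that \rulename{T-Assign} to a public local preserves the elaborated environments of other variables. If the model lets Flux refer to a reassigned public variable through an existential, I would restrict $F$ to require that public variables mentioned in sensitive refinements are not reassigned within their scope. I expect the appendix's model to impose this already.
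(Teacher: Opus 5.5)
Your route is the paper's: induction on the model derivation, a rule-for-rule translation, an implication-elaborates lemma for \rulename{T-Sub} and the $\sqsubseteq$ premises, the borrow checker for distinctness, and instantiation of the symbolic refinement parameters for \rulename{WF-Fn}. The gap is that the lemma carrying the whole argument is false as you state it. A valid $\varphi \Rightarrow \psi$ does not give $\mathcal{E}(\varphi) \le \mathcal{E}(\psi)$ when $\varphi$ is unsatisfiable. For example, $\varphi \equiv \nu.i \ge n \wedge \nu.i \le 5$ validly implies $\psi \equiv \nu.i \le 10-n$. Yet $\varphi$'s least atom bound, $5$, exceeds $\psi$'s, $10-n$, whenever $n > 5$, and those are exactly the public valuations at which $\varphi$ has no model and the solver accepts the implication vacuously. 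If $\mathcal{E}$ instead means the supremum over models, it is not an environment there at all: an upper bound per coordinate gives finiteness, not a model. At such points the \rulename{T-Sub} premise fails, and your main claim yields no $\lambda$-Forte derivation.

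The paper supplies two things you skip.
\begin{itemize}
\item It proves the lemma only for satisfiable $\varphi$, via Lemma~\ref{lem:max}. A satisfiable conjunction of per-coordinate atoms is a product of closed-above intervals, so it contains its coordinatewise maximum $\mathrm{ub}(\varphi)$. That point satisfies $\psi$ and therefore lies below $\mathrm{ub}(\psi)$. The same fact identifies your supremum with the least atom bound and gives the semantic agreement of Lemma~\ref{lem:agree}.
\item It treats unsatisfiable refinements separately. In $F$ every path condition and non-index fact is public, hence equal across the two runs. So a refinement with no model at the runs' public values marks a point neither run reaches, and any elaboration serves there.
\end{itemize}

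Your closing concern about reassigned public locals is real, and the paper's proof passes over it: it maps \rulename{M-Assign} to \rulename{T-Assign} without comment. But neither of your responses proves the theorem as stated. The paper places the $n$-day accumulation loop and the b-ary tree kernel in $F$. Their inferred invariants couple the accumulator's environment to the loop counter (e.g.\ $\kw{out}.a \le i$), and their bodies increment $i$. On these kernels your check that \rulename{T-Assign} to a public local preserves the other entries' elaborated environments fails. Your fallback of forbidding such reassignments removes them from $F$ and proves a weaker theorem. What is needed is to carry through your first option, environments as expressions over public values, together with an account of how an assignment to a public local re-expresses the bounds that mention it.
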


The model of the check is a judgment with four properties, each a
documented feature of Flux: only signatures and subtyping constrain
refinements on opaque types, and subtyping is implication that the solver
discharges; the checker updates \kw{\&mut} arguments to signatures with
\kw{ensures} clauses in place, and \kw{rustc} rejects two live exclusive
borrows of one local; the checker checks loops against an inferred
invariant and joins conditionals by a common weakening; refinements on
public values are unary facts that hold on both runs, since public values
are equal across them. The model is the restriction of the core calculus
of the Flux paper~\cite{flux} to this fragment, with the opaque structs
and their signatures as the only constructs that produce a
sensitive-typed value; deriving the four properties from that calculus,
in place of assuming them, is future work. The proof
(Appendix~\ref{app:corr}) is an induction on the model
derivation in which each rule maps to the $\lambda$-Forte rule of the same
name, with one lemma doing the work: implication between refinements in
$F$ elaborates to pointwise environment inequality, because a satisfiable
conjunction of per-coordinate atoms contains its own coordinatewise
maximum. The paragraph on the trusted computing base below states what
the theorem leaves to trust.

\paragraph{Two index sorts.}
Flux's \kw{real} and \kw{int} refinement sorts do not unify, and its
refinement language has no cast between them, so no constraint can
equate a \kw{real}-sorted environment with a \kw{usize} loop count. The library
therefore splits environments by purpose: \kw{real}-sorted indices for
literal-constant composition, where fractional coefficients cost nothing,
and \kw{int}-sorted indices (scaled rationals) wherever a bound mentions a
loop count or a runtime value, since \kw{i64} values lift into \kw{int}
indices directly. Every case-study kernel uses the \kw{int} sort; the
library's own regression loop exercises the \kw{real} sort. The same
boundary rules out stating $L_2$ claims: the refinement grammar has no
square root, so the grammar confines $L_2$ claims to radius-style
over-approximation at leaves, as in the sum-of-squares rule below. It also
keeps a sized mean at the level of its components: a bound of the form
$(hi - lo)/n$ is a rational in a runtime value, which no \kw{int}-sorted
index expresses, so the checked kernel releases the clamped sum and the
insensitive count and leaves the public division past the sensitivity
boundary (\S\ref{sec:casestudies}).

\paragraph{The trusted base, machine-checked.}
Every \kw{\#[trusted]} signature with a two-run obligation names a
Verus~\cite{verus} theorem that discharges its Lemma~\ref{lem:leaf}
obligation, in a satellite development of 43 verified theorems and
supporting lemmas. The correspondence is one-to-one, and each signature's
documentation comment records it; representative rows:

\begin{table}[t]
\centering
\begin{tabular}{LL}
\toprule
Trusted signature & Leaf theorem \\
\midrule
\kw{SInt::add}, \kw{add\_assign} & \kw{add\_sensitivity} \\
\kw{SInt::clamp} & \kw{clamp\_lipschitz} \\
\kw{SVecBnd::sum} (diameter rule) & \kw{sum\_clamped\_k\_subst\_diameter} \\
\kw{SVecBnd::sum\_squares} & \kw{sum\_squares\_bounded\_k\_subst} \\
\kw{SVec::histogram} & \kw{histogram\_subst\_l1} \\
\kw{SVec::prefix\_counts} & \kw{count\_le\_subst\_stable} \\
\kw{SVecL1::chunk\_sums} & \kw{chunk\_sums\_l1\_stable} \\
\kw{SVecL1::add\_at} & \kw{add\_at\_l1} \\
\bottomrule
\end{tabular}
\caption{The trusted-base seam (excerpt). Each \kw{\#[trusted]} Flux
signature cites the Verus theorem that proves its two-run validity.}
\label{tab:leaves}
\end{table}

This addresses the structural weakness of the lineage noted in
\S\ref{sec:intro}: a trusted primitive signature outside the reach of the
metatheory. Here the leaf layer proves the trusted signatures as stated,
and the calculus's Lemma~\ref{lem:leaf} consumes them by name.

\paragraph{Inference.}
The annotation burden of the checked layer is small and concentrated at
loops. Flux infers loop invariants from qualifier shapes it scrapes from
signatures; with scraping enabled (one project-wide switch), every
straight-line kernel and the accumulation loops of \S\ref{sec:overview}
verify with no annotations beyond their signatures. The b-ary tree kernel
is the sole exception observed: its invariant couples the accumulated
environment to the loop counter ($\kw{out}.a \le i$), a shape absent from
its signature, and it needs 2 qualifier hints, which ablation over the
hint subsets verifies minimal. The negative direction is equally
important: the checker rejects a 1-sensitivity claim on the clamped sum, a
tightened loop bound, and an off-by-one diameter constant, so the
inference is doing arithmetic and no pattern-matching would do in its
place.

\paragraph{Budgets.}
Privacy budgets are affine tokens: a non-\kw{Copy} \priv{\kw{Budget}}
indexed by a scaled $\priv{\varepsilon}$, with \kw{split} and \kw{spend}
consuming their argument by move. Rust's move semantics supplies the
affine discipline; we add no substructural typing. A release primitive
closes the pipeline end to end: \kw{laplace\_release} consumes the
sensitive value and a budget token, with the Laplace calibration
$\priv{\varepsilon} \ge \sens{\Delta}/b$ as an integer side condition on
the signature (\kw{eps} $\cdot$ \kw{scale} $\ge 10^{6} \cdot$
\kw{sensitivity}, in micro-epsilons), so an under-budgeted or
under-calibrated release is a type error.

\paragraph{Trusted computing base.}
The guarantee rests on the following, in decreasing order of the evidence
behind each. The two-run validity of every deterministic trusted
signature is a Verus theorem, and Verus mechanizes Theorem~\ref{thm:mp}
for the scalar fragment; we trust Verus and its solver. We trust the Rust
body of each trusted primitive to implement the function the
corresponding theorem describes, by inspection, one short body per
signature, 313 non-comment lines for the whole library outside the
kernels. We trust Flux, \kw{rustc}, the fixpoint engine, and the SMT
solver to implement the model of Theorem~\ref{thm:corr}. The declaration
of sources, the metric the user gives each source, and the injection of
raw data at a source (\kw{from\_source}) are the user's assertions about
the data. Of the probabilistic layer, the privacy-loss bound behind the
release primitive's side condition and its sequential composition are
Verus theorems (\S\ref{sec:byexample}); we trust the Laplace sampler to
draw from the discrete Laplace density, and the step from a loss bound to
a density-ratio bound is the monotonicity of $\exp$. The budget tokens make splitting and spending affine, but nothing
prevents a program from minting a second token, so one mint per program
run is a protocol the checker does not enforce.

\section{Case studies}
\label{sec:casestudies}

We evaluated \forte{} by re-implementing the kernels of 5 OpenDP
transformation families, selected for genuine in-place mutation, and
checking the certified constants against OpenDP's stability maps and proof
documents as ground truth. Every kernel composes primitives all the way
through: no \kw{\#[trusted]} annotation appears in any kernel, and the leaf
theorems of \S\ref{sec:implementation} cover every primitive the kernels
use. Table~\ref{tab:kernels} summarizes the
comparison of constants and Table~\ref{tab:burden} the annotation burden
and checking cost.

\begin{table}
\centering
\small
\begin{tabular}{LCC}
\toprule
Kernel & \multicolumn{1}{C}{OpenDP stability map} &
\multicolumn{1}{C}{\forte{} certified} \\
\midrule
clamp $\circ$ sized sum & $\lfloor d/2 \rfloor (U{-}L)$ & $\sens{hi{-}lo}$ per row \\
quantile prefix counts (fused) & $\lfloor d/2 \rfloor$ per candidate & $\sens{1}$ per row \\
quantile prefix counts (composed) & \multicolumn{1}{C}{--} & $\sens{2}$ per row \\
b-ary tree & $\mathit{num\_layers}$ (trusted) & $\sens{\kw{out}.a \le \mathit{layers}}$ (inferred) \\
count & $d$ (unsized, add/remove) & $\sens{0}$ (sized, substitution) \\
mean components & $(U{-}L)/n$ & $\sens{hi{-}lo}$ (sum), $\sens{0}$ (count) \\
sum of squares & (deviations analogue) & $\sens{hi^2{-}lo^2}$ per row \\
count by categories & $d$ ($L_1$, symmetric) & $\sens{2}$ per row \\
\bottomrule
\end{tabular}
\caption{Kernels versus OpenDP ground truth. OpenDP's $d$ is symmetric
distance, 2 per substituted row on sized data. The quantile rows certify
the prefix-count vector; OpenDP's map additionally scales by a public
factor from the quantile's $\alpha$, elided here. The count row compares
two metrics, as \S\ref{sec:casestudies} discusses; the mean row stops at
the components because the public division is outside the \kw{int} sort
(\S\ref{sec:implementation}).}
\label{tab:kernels}
\end{table}

\begin{table}
\centering
\small
\begin{tabular}{@{}LLNNNNN@{}}
\toprule
Module & Kernels &
\multicolumn{1}{R}{Lines} & \multicolumn{1}{R}{Spec} &
\multicolumn{1}{R}{Sigs} & \multicolumn{1}{R}{Hints} &
\multicolumn{1}{R}{Check (ms)} \\
\midrule
\kw{sum\_bounded} & \cellstack{clamp $\circ$ sum, helper,\\ 3 sources, $n$-day loop} & 45 & 11 & 5 & 0 & 51 \\
\kw{quantile} & fused and composed routes & 12 & 2 & 2 & 0 & 18 \\
\kw{b\_ary\_tree} & layer accumulation & 17 & 2 & 1 & 2 & 110 \\
\kw{stats} & count, mean, squares, categories & 23 & 5 & 4 & 0 & 19 \\
\kw{weekday} & element writes at public indices & 17 & 1 & 1 & 0 & 38 \\
\kw{dp\_release} & release, budget split & 17 & 2 & 2 & 0 & 20 \\
\midrule
total & & 131 & 23 & 15 & 2 & \\
\bottomrule
\end{tabular}
\caption{Annotation burden and checking cost by kernel module. ``Lines''
counts non-blank, non-comment lines; ``Spec'' the lines of those that are
signatures; ``Sigs'' the number of signatures; ``Hints'' the qualifier
shapes beyond the signatures (Flux infers all other invariants); and
``Check'' the Flux time for the slowest function in the module, from the
checker's per-function profile. The whole crate, 64 functions, checks in
0.41\,s; the 43 leaf theorems verify in 0.68\,s and the calculus
development, 46 items, in 1.31\,s. Flux at commit \kw{a166721} (August
2026) with z3 5.1 and the fixpoint solver, and Verus 0.2026.08.23, on an
Apple M4 laptop with 32\,GB of memory.}
\label{tab:burden}
\end{table}

\paragraph{Constants and conventions.}
Every certified constant agrees with OpenDP's after one unit conversion,
stated once: OpenDP's $d$ counts symmetric distance, which is 2 per
substitution on sized data, so its $\lfloor d/2 \rfloor$ prefixes
correspond to our per-substituted-row accounting. Agreement is the
expected outcome: the constants that OpenDP justifies by proof documents,
comments, and citations are here subtyping obligations that the solver
discharges on every call.

\paragraph{Across a function boundary.}
We wrote the clamp-and-sum family the way a library writes it: a helper
with an exclusive-borrow parameter and an \kw{ensures} clause
(\S\ref{sec:overview}), checked once, and called from a three-source
accumulator and from an $n$-iteration loop. The loop's invariant is
inferred with no hints. The helper is the instance of \rulename{T-CallFn}
in the case studies, and the family invokes the trusted
\kw{add\_assign} only in its 4-line body.

\paragraph{From trusted constant to inferred invariant.}
OpenDP's b-ary tree has a constant stability map,
\kw{new\_from\_constant}$(\mathit{num\_layers})$, with the constant
computed by side arithmetic and justified in prose: each input element
contributes to one node per layer. In \forte{} the program derives the
same bound. A loop builds the tree layer by layer; each layer
is a chunked sum, which preserves the $L_1$ environment because the chunks
partition the input, and each append into the output accumulates one
layer's environment through a strong update. Flux infers the loop
invariant coupling the accumulated environment to the loop counter. The 2
hints in Table~\ref{tab:burden} are qualifier \emph{shapes} ($a \le i$ and
$a \le 1$), verified minimal by ablation over hint subsets; they name
candidate atoms for the invariant, and the solver still finds the
invariant itself.

\paragraph{Fused versus compositional accounting.}
The quantile kernel admits two checked derivations that disagree by a
factor of 2. The fused rule observes that one substituted row moves any
single prefix count by at most 1, certifying constant 1 per row and
matching OpenDP's map. The compositional route derives the same vector by
histogram followed by prefix sums; the histogram moves 2 in
$L_1$ per substituted row (one bucket loses the row and another gains
it), and the $L_\infty \le L_1$ conversion then certifies only 2. Both
derivations are sound, and the arithmetic is elementary: OpenDP's own
quantile document uses the fused constant, and any system that composes
these two leaves pays the factor. Duet, Jazz, and Solo could state both
routes by adding both signatures to their trusted layers; Fuzz has no
$L_\infty$ type for the fused one. The observation here is that both
routes type in one system, each constant has a theorem behind it, and
no weakening rule takes the compositional route below 2: fusing a
two-step pipeline into one leaf halves the certified constant, and only
fusing does.
Casacuberta et al.\ report that such factors of 2 are a recurring site of
error in hand accounting, in both directions~\cite{csvw22}.

\paragraph{The count under two metrics.}
OpenDP's \kw{make\_count} has stability $d$ under add/remove distance on
unsized data. On sized data, the model of this paper, substitution
preserves length and the checker certifies the count insensitive. The
pair is instructive: the function is the same, and the constant is a
property of the metric alone. \forte{} makes the dependence explicit
because the metric is part of the vector's type. The count by categories
is the other face of the same fact: a histogram over public categories
moves two buckets per substituted row, and the certified constant 2 is
OpenDP's constant $d$ read in per-row units.

\paragraph{Manual proof obligation.}
A survey of the OpenDP repository sizes the review burden this discipline
addresses.\footnote{Survey script and raw output accompany the artifact;
OpenDP commit \kw{4198f76}, September 2026.} The library holds 109
hand-written proof documents totaling 8,433 lines (38,977 words); 26 of
them (10,021 words) concern the deterministic stability claims of
transformations, the fragment \forte{} checks, with the remainder
covering privacy maps, combinators, accuracy, and trait infrastructure.
However, the documents cover a minority of the trusted surface: the
transformation constructors contain 68 trusted stability closures, of
which 14 have a proof document adjacent to the constructor or linked from
it and 54 have none, including every variant of the bounded sum, the
b-ary tree, and the sum of squared deviations, whose constants rest on
code comments and paper citations. For the covered kernels
of Table~\ref{tab:kernels}, the 7 corresponding documents total 769 lines
(4,240 words) of manual proof whose stability arguments become checked
obligations here, at a cost of 131 lines of kernel code, of which 23 are
signatures, 2 qualifier hints, and the 452-line leaf layer,
which we wrote once and reuse across constructors. For the sums and
the b-ary tree the comparison is starker: no proof document exists, and
the checked signature is the first machine-checkable form of the claim.
The documents also transcribe pseudocode and argue functional correctness
of the released value, obligations \forte{} does not discharge; we claim
the saving only for the stability-map half, which calibrates the noise.

\paragraph{Negative controls.}
For each kernel we also checked that the checker rejects what it must: a
$1$-sensitivity claim on the clamped sum, a $0$-sensitivity claim on the
fused quantile route, a loop bound tightened from $100n$ to $99n$, a
b-ary claim of $\mathit{layers} - 1$, an off-by-one diameter constant
($15624$ for $15625$) on the sum of squares, a $1$-per-row claim on the
count by categories, an \kw{ensures} clause on the helper tightened by
one diameter, an under-calibrated release (noise scale $124$ for $125$),
an under-provisioned budget split ($1{,}999{,}999$ micro-epsilons for two
$\priv{\varepsilon}=1$ releases), and a weekday bound tightened to $n-1$.
All fail as required. Two further programs never reach the refinement
checker: a call passing two exclusive borrows of one accumulator to a
two-parameter function is a borrow-check error in \kw{rustc}, and the
unclamped substitution-metric sum has no rule to invoke. These edits are
ours, written to mirror the errors Casacuberta et al.\ report.

\paragraph{Element writes and release.}
The checker covers element mutation at \emph{public} indices: a
weekday-report kernel accumulates $n$ daily totals in place through
$\kw{add\_at}(\kw{\&mut}\,v, i, y)$, whose leaf theorem bounds the $L_1$
growth by the added value's movement because both runs write the same
position, and whose loop invariant Flux infers with no hints. A release
kernel closes the pipeline with a calibrated Laplace release at
$\priv{\varepsilon} = 1$ and a split of one token across two releases;
\S\ref{sec:byexample} takes a complete program of this shape through to
its output.

\paragraph{Scope.}
Leaf primitives implement the kernels' data-dependent-index mutation
($\kw{hist}[\mathit{idx}]\ {+}{=}\ 1$), because the write index derives
from sensitive data: the two runs would write different positions, only
value magnitudes would bound the growth, and a checked loop over the index
would branch on secrets, which the core rejects by design
(\S\ref{sec:calculus}). The
checked layer owns every composition, loop, strong update, function
boundary, and element write whose indices are public.

\section{Forte by example: OpenDP's own primitives}
\label{sec:byexample}

The case studies re-implement kernels. OpenDP's own transformations are
built from three layers at which verification would attach: the bodies of
the primitives, the stability maps that certify them, and the
floating-point terms that widen a map for rounding error; the
Polars-based transformations compose query plans and stand apart.
Table~\ref{tab:opendp} sizes the layers on the library's hand-written
transformations. The examples below verify the first two layers in the
shape OpenDP writes them, starting from the smallest body. Each example
is a file in the accompanying notebook, which runs the checker on it; the
checker's summary line follows each listing.

\begin{table}
\centering
\small
\begin{tabular}{LN}
\toprule
Hand-written transformations (29 constructor files, tests excluded) & 4,049 lines \\
Stability maps: constant / closure / general & 19 / 14 / 4 \\
Stability maps in Polars transformations & 31 of 68 \\
Function bodies that are one-line closures & 19 of 34 \\
Constructors built on the row-by-row combinator & 5 \\
Interval-arithmetic calls (\kw{inf\_add}, \kw{inf\_mul}, \ldots) & 85 \\
Numeric traits a body may depend on & 43 \\
Sum family: constructor files / stability maps & 6 / 11 \\
\bottomrule
\end{tabular}
\caption{OpenDP's transformation layer at commit \kw{4198f76}.
\kw{new\_from\_constant} builds a constant map; \kw{new\_fallible} builds
a closure map.}
\label{tab:opendp}
\end{table}

\paragraph{Example 1: the body of the checked sum.}
OpenDP's \kw{make\_sized\_bounded\_int\_checked\_sum} sums its
argument by an iterator, \kw{arg.iter().sum()}, inside a
\kw{Function::new} closure, after the constructor's overflow check,
\kw{can\_int\_sum\_overflow}, rejects bounds under which a sum of
\kw{size} values could overflow. The same loop in the form Verus checks,
with the overflow check as its precondition:

\needspace{20\baselineskip}
\begin{lstlisting}
fn checked_sum(v: &Vec<i64>, lo: i64, hi: i64) -> (r: i64)
    requires
        lo <= hi,
        bounded(v@, lo as int, hi as int),
        v.len() * magnitude(lo as int, hi as int) <= i64::MAX,
    ensures
        r as int == sum_spec(v@),
{
    let mut acc: i64 = 0;
    let mut i: usize = 0;
    while i < v.len()
        invariant i <= v.len(), acc as int == sum_prefix(v@, i as int), ...
        decreases v.len() - i
    {
        acc = acc + v[i];
        i = i + 1;
    }
    acc
}
\end{lstlisting}

\noindent
\kw{verification results:: 7 verified, 0 errors}. The postcondition ties
the executable loop to the mathematical sum. The overflow check becomes
the \kw{requires} clause, and Verus discharges the in-range obligation on
\kw{acc + v[i]} from it, which is the guarantee the construction-time
check exists to provide.

\paragraph{Example 2: clamp, then sum, run twice.}
\kw{make\_clamp} is \kw{make\_row\_by\_row} applied to \kw{total\_clamp};
5 constructors build on that combinator. Its body pushes
\kw{clamp\_row(v[i], lo, hi)} in a loop, with a postcondition relating
every output row to its input row, and the chain calls it and then
\kw{checked\_sum}. To state the two-run claim, a function runs the chain
twice:

\begin{lstlisting}
fn pipeline_two_runs(v1: &Vec<i64>, v2: &Vec<i64>, lo: i64, hi: i64)
    -> (r: (i64, i64))
    requires
        lo <= hi,
        v1.len() == v2.len(),
        v1.len() * magnitude(lo as int, hi as int) <= i64::MAX,
    ensures
        absdiff(r.0 as int, r.1 as int)
            <= num_diffs(v1@, v2@) * (hi as int - lo as int),
{
    let r1 = clamp_then_sum(v1, lo, hi);
    let r2 = clamp_then_sum(v2, lo, hi);
    proof { sum_diameter(...); clamp_rowwise(...); }
    (r1, r2)
}
\end{lstlisting}

\noindent
\kw{verification results:: 21 verified, 0 errors}. The outputs of the two
runs are within (substituted rows) $\cdot$ $(\mathit{hi} - \mathit{lo})$
of each other, which is OpenDP's map $\lfloor d_{\mathit{in}}/2 \rfloor
\cdot (U - L)$ in per-row units, and the proof concerns the code that
runs. The proof is self-composition: the leaf theorems of \S\ref{sec:implementation}
are the lemmas the wrapper's proof block calls, and the calls are the
whole proof.

\paragraph{Example 3: a fused body.}
\kw{make\_quantile\_score\_candidates} scores each candidate from the
number of rows at or below it. The fused body counts directly, with one
loop per candidate, and its two-run claim is per coordinate:

\begin{lstlisting}
fn prefix_counts_two_runs(v1: &Vec<i64>, v2: &Vec<i64>, cands: &Vec<i64>)
    -> (r: (Vec<usize>, Vec<usize>))
    requires v1.len() == v2.len()
    ensures
        forall|j: int| 0 <= j < cands@.len()
            ==> absdiff(r.0@[j] as int, r.1@[j] as int)
                <= num_diffs(v1@, v2@),
\end{lstlisting}

\noindent
\kw{verification results:: 9 verified, 0 errors}. The constant is 1 per
substituted row in $L_\infty$, the fused constant of
\S\ref{sec:casestudies}, and the lemma behind it is 4 lines: one
substituted row changes one indicator.

\paragraph{Example 4: stability maps as indices.}
OpenDP's \kw{Transformation} holds its map as an \kw{Arc<dyn Fn>} from
input distance to output distance, and \kw{make\_chain\_tt} composes two
maps into a third at runtime. In the refined version the constant is the
index:

\begin{lstlisting}
#[opaque]
#[refined_by(c: int)]
pub struct Transformation { constant: i64 }

#[trusted]
#[spec(fn(c: i64{c >= 0}) -> Transformation[c])]
pub fn new_from_constant(c: i64) -> Transformation { ... }

#[trusted]
#[spec(fn(t1: Transformation[@a], t0: Transformation[@b])
       -> Transformation[a.c * b.c])]
pub fn chain_tt(t1: Transformation, t0: Transformation)
    -> Transformation { ... }

#[spec(fn(t: &Transformation[@tv],
          d_in_symmetric: i64{d_in_symmetric >= 0})
       -> i64[tv.c * (d_in_symmetric / 2)])]
pub fn stability_map_symmetric(t: &Transformation, d_in_symmetric: i64)
    -> i64 {
    Transformation::stability_map(t, d_in_symmetric / 2)
}
\end{lstlisting}

\noindent
A chain's constant is the product of its parts' indices, which the
checker computes at every use. OpenDP's symmetric distance, 2 per substituted row,
enters once, as the integer division in the last signature; the two-run
justification of \kw{chain\_tt} is the composition of the parts' claims,
and of \kw{new\_from\_constant} the theorem behind the constant.

\paragraph{Example 5: constructors and their chain.}
The constructors keep OpenDP's names and take OpenDP's arguments; their
constants become checked claims:

\begin{lstlisting}
#[spec(fn(lo: i64, hi: i64{lo <= hi}) -> Transformation[1])]
pub fn make_clamp(_lo: i64, _hi: i64) -> Transformation {
    Transformation::new_from_constant(1)
}

#[spec(fn(lo: i64, hi: i64{lo <= hi}) -> Transformation[hi - lo])]
pub fn make_sized_bounded_int_checked_sum(lo: i64, hi: i64)
    -> Transformation {
    Transformation::new_from_constant(hi - lo)
}

// make_clamp(bounds) >> make_sized_bounded_int_checked_sum(size, bounds)
#[spec(fn(lo: i64, hi: i64{lo <= hi}) -> Transformation[hi - lo])]
pub fn clamp_then_sum(lo: i64, hi: i64) -> Transformation {
    Transformation::chain_tt(make_sized_bounded_int_checked_sum(lo, hi),
                             make_clamp(lo, hi))
}

#[spec(fn(lo: i64, hi: i64{lo <= hi}) -> Transformation[3 * (hi - lo)])]
pub fn clamp_sum_scale(lo: i64, hi: i64) -> Transformation {
    Transformation::chain_tt(Transformation::new_from_constant(3),
                             clamp_then_sum(lo, hi))
}
\end{lstlisting}

\noindent
\kw{10 functions processed: 7 checked; 3 trusted; 0 errors}. The claim
$\mathit{hi} - \mathit{lo}$ on the chain is Example 2's theorem, so
\S\ref{sec:implementation} describes the seam between the two layers: a
constant in a signature, a two-run theorem behind it.
Claiming \kw{Transformation[hi - lo - 1]} for \kw{clamp\_then\_sum} is a
refinement error, as the notebook shows by editing the file and running
the checker.

\paragraph{Example 6: a complete program, from the data to the noisy output.}
One table of individuals has two columns, ages and incomes, both from
source $a$, so a substituted individual changes one row of each. The
program releases the row count, the sum of ages, and the sum of incomes
under a total budget of $\priv{\varepsilon} = 2$, and post-processes a
mean age. Its signature is its privacy contract:

\needspace{22\baselineskip}
\begin{lstlisting}
#[spec(fn(ages: &SVec[1,0,0], incomes: &SVec[1,0,0],
          budget: {Budget[@bv] | bv.eps >= 2_000_000}, seed: u64)
       -> (i64, i64, i64))]
pub fn census_release(ages: &SVec, incomes: &SVec,
                      budget: Budget, seed: u64) -> (i64, i64, i64) {
    let n = SVec::count(ages);                    // SInt[0,0,0]
    let n_public = release_insensitive(n);        // exact, and free
    let (b_ages, b_incomes) = Budget::split(budget, 1_000_000);
    let age_sum = sized_bounded_sum(ages, 0, 125);      // SInt[125,0,0]
    let noisy_age_sum =
        laplace_release_a(age_sum, b_ages, 125, seed);
    // SInt[200000,0,0]
    let income_sum = sized_bounded_sum(incomes, 0, 200_000);
    let noisy_income_sum =
        laplace_release_a(income_sum, b_incomes, 200_000, seed + 1);
    (n_public, noisy_age_sum, noisy_income_sum)
}
\end{lstlisting}

\noindent
The signature reads as the program's contract. \kw{SVec[1,0,0]} is a
vector of rows under the row-substitution metric whose index is a
sensitivity environment, one coordinate per source; the index says that
one substituted row of source $a$ changes at most one row of the column,
and that sources $b$ and $c$ cannot move it. Both columns have the same
index because they are columns of one table, and only
\kw{from\_source\_a} produces a value of this type, which makes the
loading step the program's declaration of what it protects.
\kw{Budget[@bv]} binds the token's size in micro-epsilons so that the
constraint can name it, and the constraint demands at least
$\priv{\varepsilon} = 2$; a caller with a smaller token fails to type,
and since \kw{Budget} is not \kw{Copy}, the token moves into the call and
the caller cannot reuse it. The seed is public. The result type is three
plain integers with no index: the outputs have left the sensitivity
discipline, and the body can produce an unrefined \kw{i64} from a
sensitive value only through \kw{laplace\_release\_a}, which demands a
token and a calibrated scale, or through \kw{release\_insensitive}, which
demands a zero environment. The signature omits the sensitivities 125 and
200,000, the split of the budget, and the noise distribution: the first
two are the body's business, which the checker settles against the
release primitives' side conditions, and the third is the trusted
sampler.

\noindent
\kw{3 functions processed: 3 checked; 0 trusted; 0 errors}, with the
library's 64 functions checked alongside; Flux checks the driver that
loads 1,000 synthetic rows and mints the budget as well. Running it prints the
count, the two noisy sums beside the exact sums they estimate (within 157
and 72,359 at these scales), the post-processed mean, and the budget
spent. The count's environment is zero, so \kw{release\_insensitive}
accepts it and the budget stays whole. The sums' environments are the
diameters 125 and 200,000, and each release's noise scale must satisfy the
calibration side condition for its value's certified sensitivity. The
split moves the budget, so a token spent twice is a compile error before
the refinement checker runs. The checkers reject four edits: a contract of
$\priv{\varepsilon} = 1.5$, under which the second release is
under-provisioned; a scale of 124 for the ages; one token used for both
releases; and the age sum passed to the exact release, whose argument type
admits only a zero environment.

\paragraph{The theorem behind the release.}
The side condition is the Laplace calibration
$\priv{\varepsilon} \ge \sens{\Delta} / b$. The Laplace mechanism at scale
$b$ has density proportional to $\exp(-|k|/b)$, so at any output $y$ the
ratio of the two runs' output densities is
$\exp((|y - f_2| - |y - f_1|)/b)$, and its logarithm, the privacy loss,
is $(|y - f_2| - |y - f_1|)/b$. A Verus development of 14 verified items
proves the arithmetic of the classical argument with the exponent as its
object: the loss at every output is at most $\priv{\varepsilon}$ under the
side condition, by the reverse triangle inequality; two independent
releases' losses add; a pointwise ratio bound on weights is a ratio bound
on their sums over any set of outputs; and the census program's two
releases have total loss at most $\priv{\varepsilon} = 2$. Of the
probabilistic layer we still trust that the sampler draws from that
density and that $\exp$ is monotone, which turns a loss bound into the
density-ratio bound the definition of differential privacy states.

\paragraph{Limitations.}
The calculus admits borrows only for the duration of a call, and excludes
references stored in data structures, closures over sensitive data, and
sensitive control flow (\S\ref{sec:calculus}); the case studies never
needed the first three at the checked layer, and the fourth is why
data-dependent index writes are leaves (\S\ref{sec:casestudies}). We
idealize arithmetic, so overflow and floating-point rounding, including
the relaxation terms OpenDP adds to its float sums, fall outside the
guarantee (\S\ref{sec:intro}). A program fixes its source set, three in
the library, and bounds that mention a loop count must use the \kw{int}
sort, which keeps a sized mean at its components
(\S\ref{sec:implementation}). The mechanization covers the scalar
fragment, with procedures whose parameters are all borrows and whose
bodies mention only parameters; the paper proofs cover vectors, by-value
parameters, and polymorphic schemes (\S\ref{sec:metatheory}). The
correspondence theorem holds of a model of the checker with four stated
properties, and the implementations of Flux, \kw{rustc}, and the solvers
remain trusted (\S\ref{sec:implementation}). Of the probabilistic layer,
the sampler and the monotonicity of $\exp$ remain trusted, and the
Polars-based transformations, 31 of OpenDP's 68 stability maps, need a
model of that engine before this approach reaches them.

\section{Related work}
\label{sec:related}

\paragraph{Sensitivity type systems.}
Fuzz~\cite{fuzz} introduced sensitivity typing via graded linear types,
with contraction priced by the grading; DFuzz~\cite{dfuzz} added
dependency on sizes; Duet~\cite{duet} split the language into a
sensitivity fragment and a privacy fragment to support the advanced
mechanisms of modern differential privacy; Jazz~\cite{jazz} refined the
two-language design with contextual linear types and latent contextual
effects. Solo~\cite{solo} showed that first-class linearity is
unnecessary: indexing types by sensitivity environments over a fixed set
of global sources recovers the same accounting inside ordinary Haskell.
\forte{} inherits Solo's environment-indexed design and moves it to a host
with mutation. All five systems are pure calculi whose soundness
arguments proceed by substitution; none can type an in-place accumulator,
and \S\ref{sec:calculus} closes that gap. Every system in the lineage
proves metric preservation over a substitution semantics and trusts its
primitive signatures; \forte{} mechanizes both the leaf obligations and
the calculus core (\S\ref{sec:metatheory}).

\paragraph{Imperative sensitivity analyses.}
Fuzzi~\cite{fuzzi} places a Fuzz-derived sensitivity type system over an
imperative language as the top level of a three-level logic whose middle
level is apRHL~\cite{aprhl}; its judgment is flow-sensitive, with a
sensitivity context before and after each command, and assignment retypes
the assigned variable as \rulename{T-Assign} does here.
LightDP~\cite{lightdp}, ShadowDP~\cite{shadowdp}, and CheckDP~\cite{checkdp}
type imperative programs with per-variable distances that assignment
updates, in service of randomness-alignment proofs of privacy. Chaudhuri
et al.~\cite{chaudhuri10,chaudhuri11} prove Lipschitz constants for
imperative programs with loops and arrays by a program analysis whose
conditional rule reasons about branch boundaries. Each of these languages
has first-order mutable variables and no references, so two names never
denote one location and framing across a call is never in question.
\forte{} begins where they stop: its calculus admits a second name for a
caller's location, at a primitive and at a checked function boundary, and
its theorem identifies the ownership fact that keeps the frame. apRHL's
relational judgments over a store are the closest formal ancestor of
Theorem~\ref{thm:mp}; \forte{} presents the relation as a type system
with inference in place of a proof obligation per program.

\paragraph{Dynamic and hybrid enforcement.}
DDuo~\cite{dduo} tracks sensitivity dynamically in Python, trading static
guarantees for coverage of an unrestricted host language;
DPella~\cite{dpella} tracks accuracy and privacy in Haskell with static
billing. Our work is closest in aim to these library-hosted designs;
\forte{} goes beyond them by making the sensitivity claims static,
checking them at every call, and backing them with the theorems of
\S\ref{sec:implementation}, while remaining an ordinary library.
SampCert~\cite{sampcert} mechanizes the probabilistic side, verified
discrete Laplace and Gaussian samplers and composition, in Lean; it is
the natural replacement for the sampler \forte{} trusts
(\S\ref{sec:implementation}).

\paragraph{Verification in and of Rust.}
Rust verification builds on the reading of \kw{\&mut} under which an
exclusive borrow licenses framing. RustBelt~\cite{rustbelt} gives it a
semantic model in separation logic; Prusti~\cite{prusti},
Creusot~\cite{creusot}, and Aeneas~\cite{aeneas} turn it into modular
verification of Rust programs, with unary specifications that they
discharge by symbolic execution, deductive verification, or functional
translation.
\forte{} uses the same reading for a relational, quantitative property,
and reaches it by refinement inference, with no proof written per function.
Cocoon~\cite{cocoon} embeds static information-flow control in stable
Rust, evidence that ownership can host a security discipline as a
library; its labels form a lattice, whereas sensitivity environments
demand real-valued arithmetic, scaling, and metric-conversion rules, an
index language a lattice does not provide. Flux~\cite{flux} supplies the
refinement machinery that hosts \forte{}: indices on opaque structs,
strong updates through \kw{\&mut} ensures clauses, and loop-invariant
inference; \forte{} required no change to Flux, and
\S\ref{sec:implementation} reports the boundaries we met in practice (the
sort split, and one loop whose invariant needs qualifier hints) together
with the theorem that relates the two systems. Verus~\cite{verus}
provides the proof layer for the trusted base and the mechanized
calculus. Using both tools together assigns each its natural role:
refinement inference where mechanism authors write code, and
interactive-strength proof where we state the trusted claims.

\paragraph{Sensitivity failures in deployed systems.}
Casacuberta et al.~\cite{csvw22} document sensitivity underestimation
across deployed differential-privacy libraries, including the
replacement-versus-add/remove and clamping-diameter subtleties that recur
throughout this paper. OpenDP~\cite{opendp} responds with hand-written
proof documents and careful review; \forte{} turns that review into
checked obligations for the kernels of \S\ref{sec:casestudies}, and the
factor-2 fused/compositional gap \S\ref{sec:casestudies} finds is an
instance of the constant drift the survey warns about, which the checker
catches at type-checking time.

\section{Conclusion}
\label{sec:conclusion}

We have presented \forte{}, a sensitivity type system for Rust whose
soundness rests on ownership: sensitivity environments are refinement
indices, strong updates flow through exclusive borrows at primitive calls
and across checked function boundaries, and the soundness proof consumes
one ownership fact, borrow exclusivity. We have formalized
$\lambda$-\forte{} and proven metric preservation over a store-based
semantics, mechanizing the scalar core with the function rule, the
refutation of its aliased form, and every deterministic primitive
obligation in Verus, and we have proved a correspondence theorem that
transports the guarantee to the programs the checker accepts. Our case
studies demonstrate the utility of \forte{} by re-implementing OpenDP
mechanism kernels that mutate in place, matching the library's
trusted stability maps with checked constants, rejecting the diameter,
bound, calibration, and budget errors that deployed libraries have
shipped, and deriving one trusted constant as an inferred loop invariant. Future work includes
borrows that outlive a call, public-indexed element updates in the
calculus and its mechanization, source polymorphism over the environment
tuple, and a mechanized privacy layer above the affine budget tokens.

\section*{Data availability statement}
An artifact accompanies this submission: the \kw{forte-rs}
library and case-study kernels with their Flux specifications, the Verus
developments (\kw{leaves.rs}, \kw{calculus.rs}), the index-encoding
feasibility probe, the OpenDP proof-obligation survey with its raw output,
and the examples of \S\ref{sec:byexample} as a notebook that runs the
checkers on them, together with pinned tool versions and the expected
checker output for every claim (\S\ref{sec:casestudies}).

\bibliographystyle{ACM-Reference-Format}
\bibliography{refs}

\appendix

\section{Proofs}
\label{app:proofs}

Throughout, fix $\vec{d} \in \mathbb{R}_{\ge 0}^{k}$ and write
$\sens{\vec{s}}\cdot\vec{d} = \sum_i \sens{s_i} d_i$. Relatedness is as in
Definition~\ref{def:rel}; evaluation is big-step and deterministic;
primitives denote the reference implementations the leaf theorems model.

\begin{lemma}[Weakening]
\label{applem:weak}
If $v_1 \approx_{S[\sens{\vec{s}}]} v_2$ and $\sens{\vec{s}} \le
\sens{\vec{t}}$ pointwise, then $v_1 \approx_{S[\sens{\vec{t}}]} v_2$;
likewise for the vector types, and pointwise for contexts.
\end{lemma}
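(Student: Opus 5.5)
The proof unfolds Definition~\ref{def:rel} at each type and uses only monotonicity of the linear form $\sens{\vec{s}}\cdot\vec{d} = \sum_i \sens{s_i} d_i$ in $\sens{\vec{s}}$. Since every $d_i \ge 0$, the hypothesis $\sens{s_i} \le \sens{t_i}$ gives $\sens{s_i} d_i \le \sens{t_i} d_i$ for each source $i$. Summing over $i$ yields $\sens{\vec{s}}\cdot\vec{d} \le \sens{\vec{t}}\cdot\vec{d}$. This one inequality carries every case.

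For the scalar type, relatedness at $S[\sens{\vec{s}}]$ states $|v_1 - v_2| \le \sens{\vec{s}}\cdot\vec{d}$. Chaining with the inequality above gives $|v_1 - v_2| \le \sens{\vec{t}}\cdot\vec{d}$, which is relatedness at $S[\sens{\vec{t}}]$.

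The vector types $\mathsf{Vec}_{L1}$, $\mathsf{Vec}_{\mathsf{Sub}}^{\circ}$, $\mathsf{Vec}_{L\infty}$ and the bounded $\mathsf{Vec}_{\mathsf{Sub}}[lo,hi,\cdot]$ are handled the same way. Their relatedness has three parts:
\begin{itemize}
\item equality of lengths, which does not mention the environment and so carries over unchanged;
\item for the bounded form, the requirement that all entries lie in $[lo,hi]$, which also does not mention the environment and carries over unchanged;
\item a distance bound (the $L_1$ distance, the substituted-row count, or the $L_\infty$ distance) by $\sens{\vec{s}}\cdot\vec{d}$, which weakens to $\sens{\vec{t}}\cdot\vec{d}$ by transitivity, exactly as in the scalar case.
\end{itemize}
The argument never uses a property of the particular metric, only that relatedness is an upper bound by $\sens{\vec{s}}\cdot\vec{d}$. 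For public types, subtyping is the identity, so there is nothing to show.

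For contexts, suppose $\Gamma \sqsubseteq \Gamma'$ and $\sigma_1 \approx_{\Gamma} \sigma_2$. I read $\sqsubseteq$ as requiring equal domains with $\Gamma(x) <: \Gamma'(x)$ pointwise. Then each $x \in \mathsf{dom}(\Gamma')$ is related at $\Gamma(x)$, hence at $\Gamma'(x)$ by the type case just proved, so $\sigma_1 \approx_{\Gamma'} \sigma_2$. If instead $\sqsubseteq$ allows $\Gamma'$ to have a smaller domain, the conclusion only drops conjuncts, so it holds a fortiori.

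The only point needing any care is the sign condition $d_i \ge 0$, which the fixed choice $\vec{d} \in \mathbb{R}_{\ge 0}^{k}$ supplies; no step is a real obstacle. The lemma is exactly what validates \rulename{T-Sub} and the joins in \rulename{T-If}, \rulename{T-While}, and \rulename{WF-Fn} in the proof of Theorem~\ref{thm:mp}.
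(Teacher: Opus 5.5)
Your proposal is correct and follows the paper's proof: both rest on the single inequality $\sens{\vec{s}}\cdot\vec{d} \le \sens{\vec{t}}\cdot\vec{d}$, which holds because $\vec{d} \ge 0$. Both reuse that inequality unchanged for the $L_1$, substituted-row, and $L_\infty$ distances, leaving the length and $[lo,hi]$ conjuncts untouched, and lift it pointwise to contexts.
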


\begin{proof}
$|v_1 - v_2| \le \sens{\vec{s}}\cdot\vec{d} \le \sens{\vec{t}}\cdot\vec{d}$
since $\vec{d} \ge 0$; the vector cases replace $|\cdot|$ by the $L_1$,
substituted-row, or $L_\infty$ distance. Contexts: pointwise.
\end{proof}

\begin{proof}[Proof of Lemma~\ref{lem:leaf} (Leaf validity)]
Case by case over $\Sigma$; each case instantiates the named theorem at
the two runs' argument values and composes it with transitivity through
the environment bound. Representative cases:
constants and source reads are definitional ($|r - r| = 0$;
$|\varsigma_1(i) - \varsigma_2(i)| \le d_i = \sens{\vec{e}_i}\cdot\vec{d}$
by source relatedness). For \kw{add} and \kw{add\_assign},
$|x_1{+}y_1 - (x_2{+}y_2)| \le |x_1{-}x_2| + |y_1{-}y_2| \le
(\sens{\vec{\alpha}}{+}\sens{\vec{\beta}})\cdot\vec{d}$
(\kw{add\_sensitivity}); \kw{sub} is symmetric. For \kw{scale} with
public $c \ge 0$, the factor extracts:
$c\,(\sens{\vec{\alpha}}\cdot\vec{d}) = (c\sens{\vec{\alpha}})\cdot\vec{d}$
(\kw{scale\_sensitivity}). For the scalar clamp, 1-Lipschitzness passes
the environment through (\kw{clamp\_lipschitz}). For the diameter sum,
$|\Sigma\,\mathrm{clamp} - \Sigma\,\mathrm{clamp}| \le
\#\mathit{diffs}\cdot(hi{-}lo) \le
(\sens{\vec{s}}\cdot\vec{d})(hi{-}lo) =
((hi{-}lo)\sens{\vec{s}})\cdot\vec{d}$
(\kw{sum\_clamped\_k\_subst\_diameter}); the mapped variant substitutes
the image interval (\kw{sum\_mapped\_bounded\_k\_subst}). The remaining
rows appear in Table~\ref{tab:allleaves}.
\end{proof}

\begin{proof}[Proof of Lemma~\ref{lem:frame} (Frame)]
For a primitive call, \rulename{E-CallMut} reads the value arguments and
writes only the locations of the borrowed arguments, so
$\sigma' = \sigma[z_1 \mapsto w_1, \ldots, z_n \mapsto w_n]$. For a
function call, \rulename{E-CallFn} evaluates the callee body with the
actuals $z_j$ substituted for its borrowed parameters, fresh locals $u_i$
bound to its value parameters, and its own locals renamed apart, so every
location the body writes is an actual $z_j$, a fresh $u_i$, or a renamed
callee local, by induction on the body's evaluation; the rule then removes
the $u_i$ and the callee locals from the result. In both cases, for
$y \notin \{z_1 \ldots z_n\}$, $\sigma'(y) = \sigma(y)$, and the call
preserves relatedness at $y$ verbatim.
\end{proof}

\begin{proof}[Proof of Lemma~\ref{lem:rename} (Renaming)]
By induction on the typing derivation and on the evaluation derivation.
Every rule of Figure~\ref{fig:typing} mentions a local only through
$\Gamma(x)$, $\Gamma[x \mapsto \tau]$, the extension $\Gamma, x{:}\tau$,
and the distinctness premise; an injective $\rho$ commutes with the first
three and preserves the fourth. Every rule of Figure~\ref{fig:semantics}
mentions a local only through $\sigma(x)$ and $\sigma[x \mapsto v]$,
which $\rho$ commutes with likewise; \rulename{E-CallFn} chooses its
fresh locals fresh for $\rho\sigma$.
\end{proof}

\begin{proof}[Proof of Theorem~\ref{thm:mp} (Metric Preservation)]
By induction on the height of the first run's evaluation derivation,
inverting the typing derivation. Public data is equal across related runs,
so guards agree and the second run's derivation follows the same rule
skeleton; termination of the second run is part of each case's
conclusion, and the mechanization folds it into the theorem's statement.

\emph{Literals and constants.} Values evaluate to themselves; equal
literals are related at $\mathsf{pub}$; constants at $S[\sens{\vec{0}}]$
by Lemma~\ref{lem:leaf}.

\emph{Variables and shared borrows.} $v_i = \sigma_i(x)$, related by store
relatedness at $\Gamma(x)$. Reading does not consume; see the
structurality remark in \S\ref{sec:metatheory}.

\emph{Public arithmetic.} Both operands are $\mathsf{pub}$, hence equal by
the induction hypothesis; equal inputs give equal outputs.

\emph{Let.} The hypothesis on $e_1$ gives related values and stores;
extend both stores at $x$, apply the hypothesis on $e_2$, and restrict
the relation as $x$ scopes out.

\emph{Assignment.} The hypothesis on the right-hand side gives related
values; the updated stores agree with the old ones off $x$ and are
related at the new type at $x$, matching the retyped output context.

\emph{Sequencing.} Two applications of the hypothesis, threading contexts
and stores.

\emph{Subsumption.} The hypothesis at the premise, then
Lemma~\ref{applem:weak} for the value and the context.

\emph{Conditionals.} The guard is $\mathsf{pub}$, so both runs take the
same branch; the hypothesis on that branch, then Lemma~\ref{applem:weak}
into the join context.

\emph{Loops.} Inner induction on the iteration count of the first run.
Entry weakens $\Gamma \sqsubseteq \Gamma_{\mathit{inv}}$ by
Lemma~\ref{applem:weak}. The guard is $\mathsf{pub}$ in
$\Gamma_{\mathit{inv}}$, so the runs iterate in lockstep: either both
exit with stores related at $\Gamma_{\mathit{inv}}$, or both enter the
body, whose hypothesis returns stores related at the body's output
context, weakened back into $\Gamma_{\mathit{inv}}$; the inner induction
applies to the remaining iterations, a subderivation.

\emph{Pure calls.} Hypotheses on the arguments left to right;
Lemma~\ref{lem:leaf} relates the results; Lemma~\ref{lem:frame} with an
empty write set covers the store.

\emph{Mutating primitive calls.} Hypotheses on value arguments; read
values at each $z_j$ are related at $\Gamma(z_j)$ by store relatedness;
Lemma~\ref{lem:leaf} relates each post-value at its ensures type. The
updated stores are related at the updated context: at each $z_j$ because
pairwise distinctness guarantees the location received the post-value
Lemma~\ref{lem:leaf} describes, and off the write set by
Lemma~\ref{lem:frame}.

\emph{Function calls.} Let $f$ be \textsf{ok}, with body $e_f$, and let
$\sigma$ be the instantiation chosen by \rulename{T-CallFn}. Hypotheses
on the value arguments give related $v_i$. By \rulename{WF-Fn} at
$\sigma$, the body types in the context
$\Gamma_f = x_j{:}\sigma\tau_j, y_i{:}\sigma\tau'_i$ with an output
context $\Gamma_f'$ whose entries at $x_j$ weaken into
$\sigma\tau^{\mathsf{out}}_j$. The renaming $\rho = [x_j := z_j][y_i := u_i]$
is injective on the body's locals because the $z_j$ are pairwise
distinct, the $u_i$ are fresh, and the rule renames the callee's own
locals apart; by Lemma~\ref{lem:rename}, $\rho\Gamma_f \vdash \rho e_f :
\sigma\tau \dashv \rho\Gamma_f'$. The caller's stores, restricted to the
$z_j$ and extended at the $u_i$ with the related $v_i$, are related at
$\rho\Gamma_f$ because $\Gamma(z_j) = \sigma\tau_j$. The induction
hypothesis on the body's evaluation, a subderivation, gives a related
result and post-stores related at $\rho\Gamma_f'$; Lemma~\ref{applem:weak}
moves each $z_j$ to $\sigma\tau^{\mathsf{out}}_j$; removing the $u_i$
and the callee locals restricts the relation; and Lemma~\ref{lem:frame}
covers every other variable of $\Gamma$. This case and the previous
one are the only uses of the distinctness premise, matching the
counterexample of \S\ref{sec:metatheory}.
\end{proof}

\begin{proof}[Derivation for the necessity proposition]
One source, $d = 1$. The declaration \kw{double\_into} of
\S\ref{sec:metatheory} is \textsf{ok}: at any instantiation, its body
$p := p + q$ types by \rulename{T-Assign} over the \kw{add} case of
Lemma~\ref{lem:leaf}, giving $p : S[\sens{\alpha+\beta}]$, and $q$ is
untouched, so the output context weakens into the \kw{ensures} types.
Take $x := \kw{src}_1()$, so $\Gamma(x) = S[\sens{1}]$ and the two runs'
values at $x$ differ by at most 1. Without distinctness,
\rulename{T-CallFn} types $\kw{double\_into}(\kw{\&mut}\,x, \kw{\&mut}\,x)$
at $\alpha = \beta = 1$, and its output context contains, through the $q$
parameter, $x : S[\sens{1}]$. By \rulename{E-CallFn} the body runs as
$x := x + x$, so the location holds twice the source value, the runs
place values at distance up to 2 in it, and $2 > 1 \cdot 1$: the output
store is not related at the output context, refuting the theorem's
conclusion for the program
$\kw{double\_into}(\kw{\&mut}\,x, \kw{\&mut}\,x);\, x$. The mechanized
form, \kw{aliased\_call\_refutation}, is this derivation over the
access-map semantics, and it proves the typing function's rejection of
the call alongside.
\end{proof}

\section{The correspondence theorem}
\label{app:corr}

\paragraph{The model of the check.}
The checker assigns each local a refined type: for an opaque sensitivity
type $T \in \{\kw{SInt}, \kw{SVec}, \ldots\}$, a type $T\{\nu : \varphi\}$
whose refinement $\varphi$ constrains the index $\nu$, a $k$-tuple; for a
public type, a base type with a unary refinement. Write $\Delta$ for the
flow-sensitive environment of locals. We model the check on the fragment
$F$ of \S\ref{sec:implementation} as a judgment
$\Delta \vdash_F e : T\{\nu : \varphi\} \dashv \Delta'$ with the following
rules and no others for sensitive-typed values.
\begin{itemize}
\item \rulename{M-Sig}. A call to a signature (trusted or checked) with
  refinement parameters $\vec{\alpha}$, requires-refinements $\varphi_i$
  on its arguments, a result refinement $\psi$, and \kw{ensures}
  refinements $\psi_j$ on its \kw{\&mut} parameters: if for some
  instantiation of $\vec{\alpha}$ each actual's refinement implies its
  instantiated $\varphi_i$, the result has the instantiated $\psi$ and
  each \kw{\&mut} actual's entry in $\Delta$ becomes the instantiated
  $\psi_j$. The \kw{\&mut} actuals are pairwise distinct locals and no
  other live reference to them exists during the call, which \kw{rustc}
  enforces.
\item \rulename{M-Sub}. $T\{\nu : \varphi\} <: T\{\nu : \psi\}$ when
  $\varphi \Rightarrow \psi$ is valid under the public refinements in
  $\Delta$, as decided by the solver.
\item \rulename{M-Let}, \rulename{M-Assign}, \rulename{M-Seq}. Structural;
  assignment replaces the local's entry in $\Delta$.
\item \rulename{M-If}. The guard has an unrefined boolean type; the
  checker checks each branch from the same $\Delta$; the output is a
  common weakening $\Delta'$ of the branches' outputs.
\item \rulename{M-While}. An inferred invariant $\Delta_{\mathit{inv}}$
  with $\Delta \sqsubseteq \Delta_{\mathit{inv}}$; the checker checks the
  body from $\Delta_{\mathit{inv}}$, and its output weakens into
  $\Delta_{\mathit{inv}}$.
\item \rulename{M-Fn}. The checker checks a function's body from its
  parameters at the signature's requires types, with the refinement
  parameters $\vec{\alpha}$ as free index variables that every solver
  query quantifies universally, and it requires the output entries for
  \kw{\&mut} parameters to weaken into the \kw{ensures} types.
\end{itemize}
We assume of Flux that only these rules produce or consume a
sensitive-typed value (opacity), that the solver's
answers are sound, that \kw{rustc}'s borrow check is sound, and that
refinements on public values are unary facts, which hold on both runs of
a related pair because public values are equal across them.

\paragraph{Elaboration.}
For $\varphi$ in $F$, let $\mathrm{ub}(\varphi)_i$ be the minimum of the
upper bounds the atoms of $\varphi$ give coordinate $i$, an expression
over public values. The elaboration $\llbracket T\{\nu : \varphi\} \rrbracket$
is the $\lambda$-Forte type with the same metric and environment
$\mathrm{ub}(\varphi)$; public types elaborate to $\mathsf{pub}\,B$;
$\Delta$ elaborates pointwise; a signature scheme elaborates with its
refinement parameters as the environment variables $\vec{\alpha}$, and an
\kw{ensures} refinement $\nu.i = e_i(\vec{\alpha})$ to the transformer
$g_j$.

\begin{lemma}[Maximum]
\label{lem:max}
If $\varphi \in F$ is satisfiable, then $\mathrm{ub}(\varphi)$ satisfies
$\varphi$, and every $\vec{s}$ satisfying $\varphi$ has
$\vec{s} \le \mathrm{ub}(\varphi)$ pointwise.
\end{lemma}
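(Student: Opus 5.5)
The key structural fact is that $\varphi \in F$ is a conjunction of per-coordinate atoms. Each atom mentions exactly one coordinate $\nu.i$ and compares it to an expression $e$ over public values only. Once the public values are fixed, as they are in every solver query under the public refinements of $\Delta$, each such $e$ is a constant. So $\varphi$ factors as $\bigwedge_i \varphi_i(\nu.i)$, where $\varphi_i$ collects the atoms on coordinate $i$. The satisfying set of $\varphi$ is then the product $\prod_i I_i$, where $I_i$ is the solution set of $\varphi_i$. I would prove the lemma coordinate by coordinate and then reassemble via the product structure.

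Fix a coordinate $i$. The atoms of $\varphi_i$ have the forms $\nu.i \le u$, $\nu.i \ge l$, or $\nu.i = c$, so $I_i$ is an intersection of closed rays and points: a closed interval $[L_i, U_i]$ in the \kw{real} sort, or its integer points in the \kw{int} sort. I treat an equality $\nu.i = c$ as contributing both $\nu.i \le c$ and $\nu.i \ge c$. Then $\mathrm{ub}(\varphi)_i$ is the minimum of the upper constants. This minimum exists because $F$ requires at least one upper bound per coordinate and the atoms are finitely many. Call it $U_i$.

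\textbf{Pointwise bound.} The second claim is immediate. Any $\vec{s} \models \varphi$ satisfies every upper atom at coordinate $i$, so $s_i \le U_i$.

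\textbf{$\mathrm{ub}(\varphi)$ satisfies $\varphi$.} Satisfiability of $\varphi$ gives a witness $\vec{s}$, and by the product structure $s_i \in I_i$ for each $i$. Every lower constant $l$ of coordinate $i$ therefore satisfies $l \le s_i \le U_i$. Hence $U_i$ meets every lower atom, and it meets every upper atom by minimality. Equality atoms force $L_i = U_i = c$, which is handled by the same two facts. In the \kw{int} sort the constants are integers because public \kw{i64} values lift into \kw{int} indices, so $U_i$ is an integer and lies in $I_i$. Because the constraints on different coordinates are independent, setting each coordinate to $U_i$ simultaneously satisfies the whole conjunction.

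\textbf{Main obstacle.} The step I expect to need care is justifying the factorization, that is, independence across coordinates. It relies on two properties of $F$: no atom relates two coordinates (as $\nu.1 \le \nu.2$ would), and the bounding expressions are public, so they act as constants rather than as further unknowns. Both are part of the definition of $F$. One point to be explicit about is that $\mathrm{ub}$ is taken as a symbolic minimum over public expressions. The lemma is therefore stated, and should be checked, under each fixed valuation of the public values, which is exactly the semantics of the solver's validity query in \rulename{M-Sub}.
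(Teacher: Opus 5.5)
Your proof is correct and follows the paper's argument in more detail. The paper notes that $\varphi$ is a product of per-coordinate intervals, each closed above by a $\le$ or $=$ atom, and that a nonempty closed-above interval contains its supremum; your factorization and witness argument (every lower bound satisfies $l \le s_i \le U_i$) make that step explicit, and your points on the \kw{int} sort ($U_i$ is itself an integer, so it lies in $I_i$) and on reading the lemma under each fixed valuation of the public values supply details the paper's one-line proof leaves implicit.
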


\begin{proof}
The atoms constrain each coordinate separately, so $\varphi$ is a product
of intervals, each closed above by an atom of the form $\le$ or $=$; a
nonempty closed-above interval contains its supremum.
\end{proof}

\begin{lemma}[Implication elaborates]
\label{lem:impl}
If $\varphi, \psi \in F$, $\varphi$ is satisfiable, and
$\varphi \Rightarrow \psi$ is valid, then
$\mathrm{ub}(\varphi) \le \mathrm{ub}(\psi)$ pointwise.
\end{lemma}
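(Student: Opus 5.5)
The argument goes through Lemma~\ref{lem:max}: apply it to $\varphi$ to produce a witness, push that witness through the implication into $\psi$, and then apply Lemma~\ref{lem:max} again on the $\psi$ side. Throughout, hold the public values fixed to an arbitrary assignment consistent with the public refinements in $\Delta$. Validity of $\varphi \Rightarrow \psi$ is understood relative to those refinements, as in \rulename{M-Sub}. The bounds $\mathrm{ub}(\varphi)$ and $\mathrm{ub}(\psi)$ are expressions over public values, so once the assignment is fixed they evaluate to concrete tuples, and it suffices to compare these pointwise for every such assignment.

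\emph{Step 1.} Since $\varphi$ is satisfiable at the chosen assignment, Lemma~\ref{lem:max} shows that $\vec{u} = \mathrm{ub}(\varphi)$ itself satisfies $\varphi$. One check is needed here: satisfiability of $\varphi$ must hold at each public assignment we consider, not merely at some assignment. I would restrict attention to assignments at which $\varphi$ is satisfiable. This restriction is harmless, because the type in question describes a value that actually exists only at such assignments, and these are exactly the ones at which relatedness at the elaborated type is ever consulted.

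\emph{Step 2.} Because the implication $\varphi \Rightarrow \psi$ is valid, $\vec{u}$ also satisfies $\psi$. In particular $\psi$ is satisfiable. By the second clause of Lemma~\ref{lem:max}, every tuple satisfying $\psi$ lies pointwise below $\mathrm{ub}(\psi)$, so $\vec{u} \le \mathrm{ub}(\psi)$. This is the desired inequality $\mathrm{ub}(\varphi) \le \mathrm{ub}(\psi)$.

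The main obstacle is the quantifier structure over public values. The inequality is an identity between public expressions, whereas Lemma~\ref{lem:max} is a statement at a fixed valuation. I would therefore state the lemma pointwise in the public valuation and note that this is precisely the form \rulename{T-Sub} consumes, since relatedness is checked on runs whose public values are equal and concrete. A second subtlety is that $\mathrm{ub}$ is a minimum over atoms, so it is a piecewise expression. Working at a fixed valuation removes this difficulty, because the minimum evaluates to a single number there.
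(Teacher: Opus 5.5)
Your proposal is correct and follows the paper's own argument: Lemma~\ref{lem:max} makes $\mathrm{ub}(\varphi)$ a witness of $\varphi$, validity of the implication carries it into $\psi$, and the second clause of Lemma~\ref{lem:max} places it below $\mathrm{ub}(\psi)$. Your pointwise treatment of the public valuation, with satisfiability of $\varphi$ read at each valuation, only makes explicit a quantifier structure that the paper's one-line proof leaves implicit.
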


\begin{proof}
By Lemma~\ref{lem:max}, $\mathrm{ub}(\varphi)$ satisfies $\varphi$, hence
$\psi$, hence lies below $\mathrm{ub}(\psi)$.
\end{proof}

\begin{lemma}[Semantic agreement]
\label{lem:agree}
For satisfiable $\varphi \in F$, two values are related at some
environment satisfying $\varphi$ if and only if they are related at
$\mathrm{ub}(\varphi)$.
\end{lemma}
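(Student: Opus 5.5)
The plan is to prove the two directions separately, both resting on Lemma~\ref{lem:max} and on Lemma~\ref{applem:weak}. Throughout, fix the two values $v_1, v_2$ and the distances $\vec{d}$. Relatedness at an environment $\vec{s}$ means the type's distance (absolute difference, $L_1$, substituted-row count, or $L_\infty$) between $v_1$ and $v_2$ is at most $\sens{\vec{s}}\cdot\vec{d}$, together with the environment-independent side conditions: equal lengths and, for the bounded $\mathsf{Sub}$ form, entries in $[lo,hi]$. Those side conditions do not mention $\vec{s}$, so they hold at one environment if and only if they hold at every environment. The proof therefore reduces to the single inequality on the distance.

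\emph{($\Leftarrow$)} Suppose $v_1, v_2$ are related at $\mathrm{ub}(\varphi)$. Since $\varphi$ is satisfiable, Lemma~\ref{lem:max} says $\mathrm{ub}(\varphi)$ itself satisfies $\varphi$. So $\mathrm{ub}(\varphi)$ is a witness environment satisfying $\varphi$ at which the values are related, and nothing more is needed.

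\emph{($\Rightarrow$)} Suppose $v_1, v_2$ are related at some $\vec{s}$ with $\vec{s} \models \varphi$. The second half of Lemma~\ref{lem:max} gives $\vec{s} \le \mathrm{ub}(\varphi)$ pointwise. Lemma~\ref{applem:weak} then lifts relatedness from $\vec{s}$ to $\mathrm{ub}(\varphi)$, using $\vec{d} \ge 0$ so that $\sens{\vec{s}}\cdot\vec{d} \le \mathrm{ub}(\varphi)\cdot\vec{d}$. This is uniform over the metrics, because the weakening lemma already covers every vector type.

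The only step needing care is that $\mathrm{ub}(\varphi)$ is an expression over public values rather than a constant. Relatedness at it must be read with those public values fixed. This is sound because public values are equal across the two runs, by the model's assumption that public refinements are unary facts holding on both runs. Both the satisfaction claim and the pointwise comparison from Lemma~\ref{lem:max} are therefore evaluated under the same assignment to the public variables. I expect this bookkeeping to be the main obstacle. The fix is to state the lemma under an arbitrary fixed valuation of the public variables, and to check that Lemma~\ref{lem:max} holds under each such valuation. That check goes through because the atoms are per-coordinate and their bounds are public expressions, so under a valuation they become numeric interval constraints.
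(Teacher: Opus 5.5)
Your proof is correct and follows the paper's own argument, which is simply ``Lemma~\ref{lem:max} and Lemma~\ref{applem:weak}''. The first half of the maximum lemma supplies the witness for one direction, and its second half plus weakening lift relatedness to $\mathrm{ub}(\varphi)$ for the other. Your remark about evaluating $\mathrm{ub}(\varphi)$ under a fixed valuation of the public variables is sound bookkeeping that the paper leaves implicit.
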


\begin{proof}
Lemma~\ref{lem:max} and Lemma~\ref{applem:weak}.
\end{proof}

\begin{proof}[Proof of Theorem~\ref{thm:corr}]
By induction on the model derivation, producing a $\lambda$-Forte
derivation of the elaborated judgment.
\rulename{M-Let}, \rulename{M-Assign}, and \rulename{M-Seq} map to
\rulename{T-Let}, \rulename{T-Assign}, and \rulename{T-Seq}.
\rulename{M-Sub} maps to \rulename{T-Sub} by Lemma~\ref{lem:impl}, which
turns the solver's implication into the pointwise inequality
\rulename{T-Sub} requires; the same lemma, applied pointwise, elaborates
$\Delta \sqsubseteq \Delta'$ to $\Gamma \sqsubseteq \Gamma'$.
\rulename{M-If} maps to \rulename{T-If} with the elaborated join context
and \rulename{M-While} to \rulename{T-While} with the elaborated
invariant; the guards are public. \rulename{M-Sig} maps to
\rulename{T-CallPure}, \rulename{T-CallMut}, or \rulename{T-CallFn}
according to the signature: the instantiation of $\vec{\alpha}$ is the
same, the requires implications become \rulename{T-Sub} on the actuals,
the elaborated \kw{ensures} refinements are the transformers $g_j$, and
the distinctness premise is the borrow checker's guarantee. \rulename{M-Fn}
maps to \rulename{WF-Fn}: the solver's queries are valid for all values
of $\vec{\alpha}$, so the body's model derivation instantiates at every
ground $\sigma$, and each instantiation elaborates to the $\lambda$-Forte
derivation \rulename{WF-Fn} asks for. Opacity ensures the induction is
exhaustive. An unsatisfiable refinement can arise only under contradictory
public facts, on a path neither run takes, since path conditions in $F$
are public and hence equal across runs; such a point is dead in both runs
and any elaboration serves. The semantic reading of the elaborated types
agrees with the checker's by Lemma~\ref{lem:agree}, so Theorem~\ref{thm:mp}
for the elaborated program yields the claim.
\end{proof}

\section{The trusted-base seam, in full}
\label{app:leaves}

Table~\ref{tab:allleaves} lists every signature family in the library
against its mechanized obligation; \S\ref{sec:implementation} shows an
excerpt. Rows marked definitional have no theorem because the two-run
claim is immediate from the relatedness definition (equal constants,
source reads, duplication).

\begin{table}[htbp]
\centering
\small
\begin{tabular}{@{}LLL@{}}
\toprule
Signature & Claim (two-run form) & Theorem \\
\midrule
\kw{constant}, \kw{from\_source} & $0$ / source distance & definitional \\
\kw{add}, \kw{add\_assign} & sum of movements & \kw{add\_sensitivity} \\
\kw{sub} & sum of movements & \kw{sub\_sensitivity} \\
\kw{scale} ($c \ge 0$ public) & movement scales by $c$ & \kw{scale\_sensitivity} \\
\kw{clamp} (scalar) & 1-Lipschitz & \kw{clamp\_lipschitz} \\
\kw{clamp} (vector) & rows stable, bounds established & \cellstack{\kw{clamp\_rowwise\_stable},\\ \kw{clamp\_bounded}} \\
\kw{sum} (bounded, Sub) & diameter per substituted row & \kw{sum\_clamped\_k\_subst\_diameter} \\
\kw{sum\_squares} & squared-range diameter & \kw{sum\_squares\_bounded\_k\_subst} \\
\kw{sum} ($L_1$) & sum triangle & \kw{sum\_le\_sum\_absdiff} \\
\kw{bucketize} & rowwise maps preserve rows & \kw{map\_vals\_subst\_stable} \\
\kw{histogram} & $L_1 \le 2\cdot\#\mathit{diffs}$ & \kw{histogram\_subst\_l1} \\
\kw{prefix\_counts} & $L_\infty \le \#\mathit{diffs}$ & \kw{count\_le\_subst\_stable} \\
\kw{prefix\_sums} & $L_\infty \le L_1$ & \kw{prefix\_sums\_l1\_linf} \\
\kw{chunk\_sums} & partitioned sums preserve $L_1$ & \kw{chunk\_sums\_l1\_stable} \\
\kw{concat}, \kw{concat\_assign} & $L_1$ additive & \kw{concat\_l1\_additive} \\
\kw{count}, \kw{len} (sized) & length invariant & definitional \\
\kw{add\_at} (public index) & $L_1$ grows by the added movement & \kw{add\_at\_l1} \\
\kw{dup} & same value twice & definitional \\
\kw{release\_insensitive} & zero environment & definitional \\
\kw{laplace\_release} & Laplace calibration & side condition (\S\ref{sec:byexample}) \\
\bottomrule
\end{tabular}
\caption{Every trusted signature family and its mechanized obligation
(43 verified theorems and supporting lemmas in total).}
\label{tab:allleaves}
\end{table}

\end{document}